\pdfoutput=1

\let\TeXyear\year
\documentclass{ieeeaccess}

\let\year\TeXyear

\usepackage{tcolorbox}
\NewSpotColorSpace{PANTONE}
\AddSpotColor{PANTONE}{PANTONE3015C}{PANTONE\SpotSpace 3015\SpotSpace C}{1 0.3 0 0.2}
\SetPageColorSpace{PANTONE}
\definecolor{accessblue}{cmyk}{1,0.3,0,0.2}
\definecolor{greycolor}{cmyk}{0,0,0,.8}

\usepackage{cite}
\usepackage{amsmath,amssymb,amsfonts}
\usepackage[compatibility=false]{caption}
\usepackage{subcaption}
\usepackage{amsthm}
\usepackage{algorithmic}
\usepackage[ruled,vlined]{algorithm2e}

\usepackage{graphicx}
\usepackage{textcomp}
\usepackage[dvipsnames]{xcolor}
\usepackage{booktabs}
\usepackage{makecell}
\usepackage{float}
\usepackage{url}
\usepackage[utf8]{inputenc}

 \usetikzlibrary{patterns}
\usepackage{comment}
\usepackage{soul}
\usepackage{bm}
\usepackage{siunitx}

\usepackage{tikz}
\usetikzlibrary{arrows.meta,positioning,calc,fit,backgrounds}
\usetikzlibrary{arrows.meta}
\usepackage{pgfplots}
\usepackage{pifont}
\pgfplotsset{compat=1.17}

\theoremstyle{definition}
\newtheorem{definition}{Definition}

\theoremstyle{plain}
\newtheorem{theorem}{Theorem}
\newtheorem{lemma}{Lemma}

\usepgfplotslibrary{groupplots}
\tikzstyle{blackdot}=[circle,fill=black,minimum size=2mm,inner sep=0pt]
\tikzstyle{whitedot}=[circle,draw=black,fill=black,minimum size=2mm,inner sep=0pt]

\tikzset{
    solidEdge/.style={thick},
    dottedEdge/.style={thick,dashed}
}

\def\BibTeX{{\rm B\kern-.05em{\sc i\kern-.025em b}%
\kern-.08em T\kern-.1667em\lower.7ex\hbox{E}\kern-.125emX}}

\begin{document}
\history{Date of publication xxxx 00, 0000, date of current version xxxx 00, 0000.}
\doi{10.1109/ACCESS.2024.0429000}

\title{Distributed Balanced Butterfly Counting in Signed Bipartite Graphs}
\author{%
\uppercase{Kiran Mekala}\authorrefmark{1},
\uppercase{Apurba Das}\authorrefmark{2},
\uppercase{Suman Banerjee}\authorrefmark{3},
}

\address[1]{Computer Science and Information Systems, BITS Pilani Hyderabad Campus, Hyderabad, India (e-mail: p20220017@hyderabad.bits-pilani.ac.in)}

\address[2]{Computer Science and Information Systems, BITS Pilani Hyderabad Campus, Hyderabad, India (e-mail: apurba@hyderabad.bits-pilani.ac.in)}

\address[3]{Department of Computer Science and Engineering, Indian Institute of Technology Jammu, Jammu \& Kashmir, India (e-mail: suman.banerjee@iitjammu.ac.in)}


\markboth
{Mekala \headeretal: D-BBC: Distributed Balanced Butterfly Counting in Signed Bipartite Graphs}
{Mekala \headeretal: D-BBC: Distributed Balanced Butterfly Counting in Signed Bipartite Graphs}

\corresp{Corresponding author: MEKALA KIRAN (e-mail: p20220017@hyderabad.bits-pilani.ac.in).}

\begin{abstract}
The balanced butterfly is a fundamental primitive for analyzing signed bipartite graphs and provides a basis for studying higher-order structural properties, such as clustering coefficients and community structure. Despite its importance, existing approaches primarily rely on serial algorithms for balanced butterfly counting, which become inefficient on large-scale graphs. To address this limitation, we propose a distributed algorithm, \texttt{D-BBC}, based on a hybrid MPI+TBB framework that exploits MPI for inter-process communication and Intel TBB for intra-node parallelism. We conduct an experimental assessment of the proposed approach across 15 real-world datasets. Experimental results demonstrate that, on a single-node distributed system, \texttt{D-BBC} achieves average speedups of 1321$\times$ and 16.2$\times$ over the serial \texttt{BB2K} and multi-core \texttt{M-BBC} implementations, respectively. Furthermore, \texttt{D-BBC} achieves a maximum speedup of 23.58$\times$ over the distributed baseline \texttt{S-Monarch} in terms of end-to-end execution time. These results demonstrate the efficiency of the proposed distributed approach and its potential to enable high-performance signed motif analysis on large-scale bipartite graphs.
\end{abstract}

\begin{keywords}
Bipartite graph, butterfly, distributed algorithm, multi-core, motif, signed bipartite graph.
\end{keywords}

\titlepgskip=-21pt

\maketitle

\section{Introduction}
\label{sec:introduction}
Signed bipartite graphs are ubiquitous. They model interactions between two disjoint sets of entities, where relationships are associated with positive or negative semantics, such as trust/distrust, like/dislike, and approval/disapproval~\cite{chen2021maximum,torres2016drug}. For example, in user-product networks, users express positive or negative feedback through ratings or reviews, while in legislator-bill networks, legislators cast yay or nay votes on bills, naturally yielding polarized relationships~\cite{derr2019balance}. Unlike traditional bipartite graphs, which assume homogeneous relationships and treat all edges identically~\cite{sanei2018butterfly,wang2019vertex,wang2022accelerated}, signed bipartite graphs explicitly capture the polarity of interactions and therefore provide a richer representation of many real-world systems~\cite{kiran2024efficient,sun2022maximal}.
 Figure~\ref{fig:gray_vs_signed} shows an example of unsigned and signed bipartite networks constructed from a user-product network.

 The rapid growth of large-scale signed bipartite graphs has created an increasing demand for efficient graph analytics. Among various graph mining tasks, counting and enumerating network motifs are fundamental because they constitute the basic building blocks of complex networks~\cite{tang2024monarch,wang2019vertex}. Numerous cohesive structures have been studied in bipartite graphs, including bicliques, bicores, and bitrusses~\cite{heider1946attitudes,B2,B3,chen2022efficient,B5,luo2023efficient}. Among them, the butterfly (i.e., a complete $2\times2$ biclique) is the smallest and most fundamental motif, serving as the basis for applications such as graph decomposition, community detection, dense subgraph mining, and link prediction~\cite{wang2014rectangle,sanei2018butterfly,wang2022accelerated,shi2022parallel,B31}. To extend this concept to signed bipartite graphs, Derr \emph{et al.}~\cite{derr2019balance} introduced the notion of a \emph{balanced butterfly} based on social balance theory~\cite{heider1946attitudes}. Balanced butterflies have become an important primitive for analyzing structural balance in signed bipartite networks and have found applications in sign prediction, multidrug discovery, and higher-order signed network analysis~\cite{chen2021maximum,sun2022maximal,kiran2024efficient}.

\begin{figure}[t]
    \centering
    \begin{tikzpicture}[scale=0.9]

        \tikzset{
            img/.style={
                inner sep=0pt,
                anchor=center
            }
        }

        \node[img] at (-0.5,1.4) (u0) {\includegraphics[width=0.65cm]{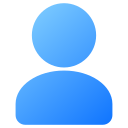}};
        \node[left=-0.15cm of u0] {$u_1$};

        \node[img] at (-0.5,0.4) (u1) {\includegraphics[width=0.65cm]{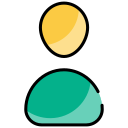}};
        \node[left=-0.15cm of u1] {$u_2$};

        \node[img] at (-0.5,-0.6) (u2) {\includegraphics[width=0.65cm]{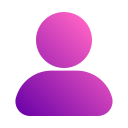}};
        \node[left=-0.15cm of u2] {$u_3$};

        \node[img] at (-0.5,-1.6) (u3) {\includegraphics[width=0.65cm]{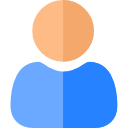}};
        \node[left=-0.15cm of u3] {$u_4$};

        \node[img] at (2,1.4) (v0) {\includegraphics[width=0.65cm]{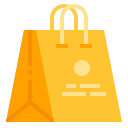}};
        \node[right=-0.15cm of v0] {$p_1$};

        \node[img] at (2,0.4) (v1) {\includegraphics[width=0.65cm]{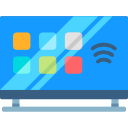}};
        \node[right=-0.15cm of v1] {$p_2$};

        \node[img] at (2,-0.6) (v2) {\includegraphics[width=0.65cm]{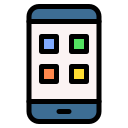}};
        \node[right=-0.15cm of v2] {$p_3$};

        \node[img] at (2,-1.6) (v3) {\includegraphics[width=0.65cm]{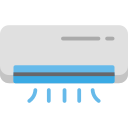}};
        \node[right=-0.15cm of v3] {$p_4$};

        \foreach \i in {0,1,2,3} {
            \foreach \j in {0,1,2,3} {
                \draw[line width=1pt, color=gray] (u\i.east) -- (v\j.west);
            }
        }

        \node[img] at (4.5,1.4) (bu0) {\includegraphics[width=0.65cm]{ICA3PP_2026/Images/u1.png}};
        \node[left=-0.15cm of bu0] {$u_1$};

        \node[img] at (4.5,0.4) (bu1) {\includegraphics[width=0.65cm]{ICA3PP_2026/Images/u2.png}};
        \node[left=-0.15cm of bu1] {$u_2$};

        \node[img] at (4.5,-0.6) (bu2) {\includegraphics[width=0.65cm]{ICA3PP_2026/Images/u3.png}};
        \node[left=-0.15cm of bu2] {$u_3$};

        \node[img] at (4.5,-1.6) (bu3) {\includegraphics[width=0.65cm]{ICA3PP_2026/Images/u4.png}};
        \node[left=-0.15cm of bu3] {$u_4$};

        \node[img] at (7,1.4) (bv0) {\includegraphics[width=0.65cm]{ICA3PP_2026/Images/p1.png}};
        \node[right=-0.15cm of bv0] {$p_1$};

        \node[img] at (7,0.4) (bv1) {\includegraphics[width=0.65cm]{ICA3PP_2026/Images/p2.png}};
        \node[right=-0.15cm of bv1] {$p_2$};

        \node[img] at (7,-0.6) (bv2) {\includegraphics[width=0.65cm]{ICA3PP_2026/Images/p3.png}};
        \node[right=-0.15cm of bv2] {$p_3$};

        \node[img] at (7,-1.6) (bv3) {\includegraphics[width=0.65cm]{ICA3PP_2026/Images/p4.png}};
        \node[right=-0.15cm of bv3] {$p_4$};

        \draw[thick,black] (bu0.east) -- (bv0.west);
        \draw[thick,black] (bu0.east) -- (bv1.west);
        \draw[thick,black] (bu0.east) -- (bv2.west);
        \draw[thick,dashed,red] (bu0.east) -- (bv3.west);

        \draw[thick,black] (bu1.east) -- (bv0.west);
        \draw[thick,black] (bu1.east) -- (bv1.west);
        \draw[thick,black] (bu1.east) -- (bv2.west);
        \draw[thick,black] (bu1.east) -- (bv3.west);

        \draw[thick,dashed,red] (bu2.east) -- (bv0.west);
        \draw[thick,black] (bu2.east) -- (bv1.west);
        \draw[thick,black] (bu2.east) -- (bv2.west);
        \draw[thick,black] (bu2.east) -- (bv3.west);

        \draw[thick,dashed,red] (bu3.east) -- (bv0.west);
        \draw[thick,dashed,red] (bu3.east) -- (bv1.west);
        \draw[thick,black] (bu3.east) -- (bv2.west);
        \draw[thick,black] (bu3.east) -- (bv3.west);

        \node at (1, -2.2) {\textbf{(a)}};
        \node at (1, -2.7) {\textbf{(Bipartite graph)}};
        \node at (6, -2.2) {\textbf{(b)}};
        \node at (6, -2.7) {\textbf{(Signed bipartite graph)}};

    \end{tikzpicture}

    \caption{Example of a user-product network: an unsigned bipartite graph (grey edges) and a signed bipartite graph (solid/dashed edges represent positive/negative interactions).}
    \label{fig:gray_vs_signed}
\end{figure}

Despite their importance, counting balanced butterflies in large-scale signed bipartite graphs remains computationally challenging. Real-world graphs are typically massive and sparse, with highly skewed degree distributions, leading to irregular workloads and substantial computational and memory requirements. Existing studies on balanced butterfly counting have primarily focused on sequential algorithms~\cite{derr2019balance,chen2021maximum,kiran2024efficient}, while parallel and distributed efforts have largely targeted butterfly counting in unsigned bipartite graphs~\cite{shi2022parallel,xia2024gpu,wang2024parallelization,tang2024monarch}. However, these techniques cannot be applied directly to signed bipartite graphs because edge signs impose additional constraints during butterfly verification. In addition, distributed execution also requires efficient graph partitioning, communication minimization, and workload balancing. Consequently, scalable distributed algorithms for balanced butterfly counting in signed bipartite graphs remain largely unexplored.

\subsection{Motivation and Challenges}

The increasing scale of signed bipartite graphs makes exact balanced butterfly counting computationally challenging, particularly for vertices with large neighborhoods. Existing serial approaches~\cite{derr2019balance,kiran2024efficient,chung2023maximum} and shared-memory approaches~\cite{kiran2026multi} are constrained by the computational and memory resources of a single machine, limiting their applicability to large-scale graphs. Although distributed algorithms have been developed for butterfly counting in \emph{unsigned} bipartite graphs, they do not consider edge signs and therefore cannot directly support balanced butterfly counting in signed bipartite graphs~\cite{tang2024monarch,weng2022distributed}. This motivates the development of a distributed solution that can exploit the computational resources of multiple processing nodes while preserving the sign information required for balanced butterfly verification.

Extending distributed butterfly counting to signed bipartite graphs introduces several challenges. Since a butterfly may span multiple graph partitions, local counting can miss cross-partition butterflies or result in duplicate counting when replicated neighborhoods are used. In addition, edge signs must be preserved during partitioning and communication to correctly identify balanced butterflies. The highly irregular distribution of neighborhood sizes can further lead to workload imbalance and straggler processes. Therefore, an effective distributed algorithm must address three key challenges: (i) ensuring that every balanced butterfly is counted exactly once, (ii) minimizing inter-rank communication, and (iii) balancing the computational workload across MPI ranks. To address these challenges, we propose \texttt{D-BBC}, a hybrid MPI+TBB algorithm that combines distributed-memory parallelism with intra-node shared-memory parallelism. \texttt{D-BBC} employs workload-balanced pivot assignment, communication-efficient neighborhood exchange, and distributed reduction to achieve exact balanced butterfly counting while reducing communication overhead and improving computational scalability.

The main contributions of this paper are summarized as follows.

\begin{itemize}

\item We adapt and extend the Monarch algorithm~\cite{tang2024monarch}, originally developed for counting butterflies in unsigned bipartite graphs, to the signed setting. The resulting algorithm, \texttt{S-Monarch}, enables exact counting of balanced butterflies in signed bipartite graphs.

\item We extend our prior CPU-based algorithm, \texttt{BB2K}~\cite{kiran2024efficient}, by designing and implementing a vertex-level parallel algorithm, \texttt{M-BBC}, that accelerates wedge-based balanced butterfly counting while avoiding the enumeration of unbalanced substructures.

\item We propose \texttt{D-BBC}, a distributed algorithm for exact balanced butterfly counting in signed bipartite graphs. We implement \texttt{D-BBC} using a hybrid MPI+TBB framework, enabling distributed-memory execution with intra-node parallelism.

\item We conduct extensive experiments on 15 real-world signed bipartite graphs and compare \texttt{D-BBC} against serial (\texttt{BB2K}), multi-core (\texttt{M-BBC}), and distributed (\texttt{S-Monarch}) baselines, demonstrating substantial speedups across the evaluated datasets.

\end{itemize}

The rest of the paper is organized as follows. Section~\ref{Sec:related} reviews the related work. Section~\ref{Sec:prob_stmt} presents the preliminaries and problem formulation. Section~\ref{Sec:alg} describes the proposed distributed algorithm. Section~\ref{Sec:EE} presents the experimental evaluation of the proposed methods. Finally, Section~\ref{Sec:Con} concludes the paper.

\subsection{Applications of balanced butterflies}

 We list a few applications to motivate balanced butterfly counting.
 \begin{itemize}

    \item \textbf{Detection of multidrug combinations~\cite{torres2016drug}.} The relationship between drugs and targets can be classified as positive (activator) or negative (inhibitor). This sign information helps understand drug-target interactions in multi-drug combinations. As illustrated in Fig.~\ref{fig:Drug}, balanced butterflies are useful for detecting underlying patterns where multiple drugs exhibit similar effects on common targets, enabling the identification of synergistic effects or anomalies. Panel~(a) shows examples where drug actions are coherent at each target and the cycle is balanced, whereas panel~(b) shows incoherent drug actions on each target. Such patterns help identify synergistic interactions or contradictions, supporting the design of effective therapeutic strategies.

\begin{figure}[htbp]
\centering
\begin{tikzpicture}[node distance=0.2cm and 0.2cm,
  every node/.style={inner sep=0pt},
  edge/.style={thick}, 
  scale=0.80, transform shape
]

\newcommand{\drawbutterfly}[7]{
  \begin{scope}[shift={(#1,#2)}]
    \node (d#3a) at (0, 0) {\includegraphics[width=0.7cm]{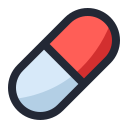}};
    \node (d#3b) at (0, -1.0) {\includegraphics[width=0.7cm]{Images/drug1.png}};
    \node[below=1pt of d#3a] {\scriptsize Drug 1};
    \node[below=1pt of d#3b] {\scriptsize Drug 2};

    \node (t#3a) at (1.6, 0) {\includegraphics[width=0.7cm]{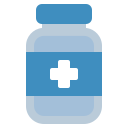}};
    \node (t#3b) at (1.6, -1.0) {\includegraphics[width=0.7cm]{Images/target.png}};
    \node[below=1pt of t#3a] {\scriptsize Target 1};
    \node[below=1pt of t#3b] {\scriptsize Target 2};

    \draw[edge, #4] (d#3a.center) -- (t#3a.center);
    \draw[edge, #5] (d#3a.center) -- (t#3b.center);
    \draw[edge, #6] (d#3b.center) -- (t#3a.center);
    \draw[edge, #7] (d#3b.center) -- (t#3b.center);
  \end{scope}
}

\drawbutterfly{0}{0}{1}{solid}{solid}{solid}{solid}
\node at (0.8, -2.1) {};

\drawbutterfly{3}{0}{2}{solid}{dashed}{solid}{dashed}
\node at (3.8, -2.1) {\textbf{(a)}};

\drawbutterfly{6}{0}{3}{dashed}{dashed}{dashed}{dashed}
\node at (6.8, -2.1) {};

\drawbutterfly{3}{-3.1}{4}{dashed}{solid}{solid}{dashed}
\node at (3.8, -5.2) {\textbf{(b)}};

\end{tikzpicture}
\caption{Coherent/incoherent butterflies in drug-target signed networks.}
\label{fig:Drug}
\end{figure}

\item \textbf{Detecting dense subgraphs~\cite{sariyuce2018}:} Balanced butterfly counts can reveal dense regions in signed bipartite graphs using both vertex- and edge-based measures. For instance, in  Fig.~\ref{fig:comparison}(a), vertices $a, b, e,$ and $f$ each participate in two balanced butterflies, while $c$ and $d$ are part of three. Since their induced subgraph contains only one butterfly, forming a \texttt{2-tip dense region}. Similarly, in  Fig.~\ref{fig:comparison}(b), the central edges $(c,3), (c,4), (d,3), (d,4)$ each belong to a balanced butterfly, giving them a wing number of 1 and collectively defining a \texttt{1-wing}. Additionally, two $(3, 2)$-bicliques $(abc, 12)$ and $(def, 56)$ have edges participating in two balanced butterflies (\texttt{2-wing}), highlighting highly connected substructures. Vertex $g$ has no balanced butterflies.
\end{itemize}

\begin{figure}[ht]
\centering

\begin{tikzpicture}[scale=0.7]

\tikzstyle{blackdot} = [circle, draw=black, fill=red!40, minimum size=0.22cm, inner sep=0pt]
\tikzstyle{whitedot} = [circle, draw=black, fill=blue!40, minimum size=0.2cm, inner sep=0pt]

\node[blackdot, label=above:{1}] (1) at (0,1.5) {};
\node[blackdot, label=above:{2}] (2) at (1,1.5) {};
\node[blackdot, label=above:{3}] (3) at (2,1.5) {};
\node[blackdot, label=above:{4}] (4) at (3,1.5) {};
\node[blackdot, label=above:{5}] (5) at (4,1.5) {};
\node[blackdot, label=above:{6}] (6) at (5,1.5) {};

\node[whitedot, label=below:{a}] (a) at (0,0) {};
\node[whitedot, label=below:{b}] (b) at (1,0) {};
\node[whitedot, label=below:{c}] (c) at (2,0) {};
\node[whitedot, label=below:{d}] (d) at (3,0) {};
\node[whitedot, label=below:{e}] (e) at (4,0) {};
\node[whitedot, label=below:{f}] (f) at (5,0) {};
\node[whitedot, label=below:{g}] (g) at (6,0) {};

\draw[dashed] (a)--(1);
\draw[thick] (a)--(2);
\draw[dashed] (b)--(1);
\draw[thick] (b)--(2);
\draw[dashed] (c)--(1);
\draw[thick] (c)--(2);
\draw[dashed] (c)--(3);
\draw[dashed] (c)--(4);
\draw[dashed] (d)--(3);
\draw[dashed] (d)--(4);
\draw[thick] (d)--(5);
\draw[dashed] (d)--(6);
\draw[thick] (e)--(5);
\draw[dashed] (e)--(6);
\draw[thick] (f)--(5);
\draw[dashed] (f)--(6);
\draw[thick] (g)--(6);

\begin{scope}[on background layer]
\node[draw, dashed, fill=green!20, fill opacity=0.35,
      rounded corners,
      fit=(a)(b)(c)(d)(e)(f)(1)(2)(3)(4)(5)(6),
      inner sep=12pt] {};
\end{scope}

\node at (3,-2.2) {\textbf{(a): Tip-based hierarchical dense region.}};

\end{tikzpicture}

\vspace{0.4cm}

\begin{tikzpicture}[scale=0.75]

\tikzstyle{blackdot} = [circle, draw=black, fill=blue!40, minimum size=0.22cm, inner sep=0pt]
\tikzstyle{whitedot} = [circle, draw=black, fill=red!40, minimum size=0.2cm, inner sep=0pt]

\node[whitedot, label=above:{1}] (1) at (0,1.5) {};
\node[whitedot, label=above:{2}] (2) at (1,1.5) {};
\node[whitedot, label=above:{3}] (3) at (2,1.5) {};
\node[whitedot, label=above:{4}] (4) at (3,1.5) {};
\node[whitedot, label=above:{5}] (5) at (4,1.5) {};
\node[whitedot, label=above:{6}] (6) at (5,1.5) {};

\node[blackdot, label=below:{a}] (a) at (0,0) {};
\node[blackdot, label=below:{b}] (b) at (1,0) {};
\node[blackdot, label=below:{c}] (c) at (2,0) {};
\node[blackdot, label=below:{d}] (d) at (3,0) {};
\node[blackdot, label=below:{e}] (e) at (4,0) {};
\node[blackdot, label=below:{f}] (f) at (5,0) {};
\node[blackdot, label=below:{g}] (g) at (6,0) {};

\draw[dashed] (a)--(1);
\draw[thick] (a)--(2);
\draw[dashed] (b)--(1);
\draw[thick] (b)--(2);
\draw[dashed] (c)--(3);
\draw[dashed] (c)--(4);
\draw[dashed] (d)--(3);
\draw[dashed] (d)--(4);
\draw[thick] (e)--(5);
\draw[dashed] (e)--(6);
\draw[thick] (f)--(5);
\draw[dashed] (f)--(6);
\draw[dashed] (c)--(1);
\draw[thick] (c)--(2);
\draw[thick] (d)--(5);
\draw[dashed] (d)--(6);
\draw[thick] (g)--(6);

\begin{scope}[on background layer]
\draw[dashed, thick, fill=blue!20, fill opacity=0.35]
plot[smooth cycle, tension=0.7] coordinates {
    (-0.2,-0.4) (2.2,-0.4) (2.2,0.4) (1.3,1.9) (-0.2,1.9) (-0.6,0.4)
};

\node[draw, dashed, rounded corners, fill=green!20, fill opacity=0.35,
      fit=(c)(d)(3)(4), inner sep=4pt] {};

\draw[dashed, thick, fill=red!20, fill opacity=0.35]
plot[smooth cycle, tension=0.9] coordinates {
    (3.2,-0.4) (5.4,-0.4) (5.6,0.2) (5.6,1.9) (3.8,1.9) (2.6,0.4)
};
\end{scope}

\node at (3,-2.2) {\textbf{(b): Edge-based dense regions}};

\end{tikzpicture}

\caption{Comparison of vertex-centric (tip-based) and edge-centric (wing-based) dense region detection using balanced butterflies.}
\label{fig:comparison}
\end{figure}

\section{Related Work}
\label{Sec:related}
\subsection{Butterfly Counting in Bipartite Graphs}

Butterfly counting has been extensively studied in unsigned bipartite graphs due to its importance in graph mining, community detection, dense subgraph discovery, and network analysis. Early studies focused on exact butterfly counting through edge- and wedge-based enumeration strategies~\cite{wang2014rectangle,sanei2018butterfly,zhu2018fast}, significantly reducing redundant computations during butterfly enumeration. As the size of bipartite graphs increased, subsequent research improved scalability by developing shared-memory parallel algorithms~\cite{wang2019vertex,wang2024parallelization}, GPU implementations~\cite{xia2024gpu}, I/O-efficient techniques~\cite{shi2022parallel}, and algorithms for temporal~\cite{cai2023efficient,papadias2024counting}, uncertain~\cite{zhou2021butterfly}, and streaming graphs~\cite{meng2026counting}.

To process graphs that exceed the memory capacity of a single machine, several distributed butterfly counting algorithms have also been proposed. Early distributed approaches employed graph partitioning and parallel execution across multiple machines~\cite{wang2014rectangle,weng2022distributed}. More recently, Tang \emph{et al.}~\cite{tang2024monarch} proposed Monarch, a communication-efficient distributed framework that combines neighborhood expansion with workload balancing to achieve scalable exact butterfly counting in large unsigned bipartite graphs.

Despite these advances, all existing distributed butterfly counting algorithms are designed exclusively for unsigned bipartite graphs. They neither preserve edge-sign information nor verify structural balance during enumeration. Consequently, these approaches cannot be directly extended to balanced butterfly counting in signed bipartite graphs, where every candidate butterfly must additionally satisfy balance constraints while avoiding duplicate counting and excessive communication across distributed graph partitions.

\subsection{Balanced Structures in Signed Bipartite Graphs}

Structural balance theory has recently been extended from signed graphs to signed bipartite graphs, giving rise to several balanced cohesive structures. Derr \emph{et al.}~\cite{derr2019balance} introduced the balanced butterfly as the fundamental balanced motif for signed bipartite graphs and demonstrated its usefulness in characterizing structural balance. Subsequently, Sun \emph{et al.}~\cite{sun2022maximal} investigated maximal balanced signed bicliques, while Chung \emph{et al.}~\cite{chung2023maximum} proposed the balanced $(k,\epsilon)$-bitruss model for discovering cohesive balanced subgraphs. More recently, Kiran \emph{et al.}~\cite{kiran2024efficient} generalized balanced butterfly analysis by proposing an efficient algorithm for balanced $(2,k)$-biclique counting. In addition, recent work has developed shared-memory multi-core and GPU algorithms for exact balanced butterfly counting, significantly improving the performance of single-machine implementations~\cite{kiran2026multi}.

Although these studies have substantially advanced balanced butterfly analysis in signed bipartite graphs, they are all limited to single-machine execution. To the best of our knowledge, no existing work has investigated exact distributed balanced butterfly counting in signed bipartite graphs. This paper addresses this gap by proposing D-BBC, the first distributed algorithm for exact balanced butterfly counting using a hybrid MPI+TBB framework.

 \section{Preliminaries and Problem Definition}
\label{Sec:prob_stmt}

We consider a signed bipartite graph $G = (U, V, E = E^{+}\cup E^{-})$, in which $U$ and $V$ are the bipartitions, and $E \subseteq U \times V$ is the edge set partitioned into positive edges $E^+$ and negative edges $E^-$. We define $\text{sign}(e)= ``+"$ for an edge $e \in E^{+}$ and $\text{sign}(e)= ``-"$ for an edge $e \in E^{-}$. For a vertex $u\in U$, let $d(u)$ refer to the degree of $u$ and $\Gamma(u)$ indicate the neighbors of $u$ in $G$. Now, we define some basic terms for our problem.

\begin{definition}[\textbf{Butterfly}~\cite{wang2019vertex}]
Given a bipartite graph $G=(U,V,E)$, a butterfly is a cycle of length four induced by vertices $(u_i, u_j, v_i, v_j)$, where $u_i, u_j \in U$ and $v_i, v_j \in V$, such that all four possible edges between $\{u_i,u_j\}$ and $\{v_i,v_j\}$ exist in $G$.

\label{def:butterfly}
\end{definition}

\begin{definition}[\textbf{Balanced Butterfly}~\cite{derr2019balance}]
A \textbf{butterfly} in $G$ is a subgraph induced by four distinct vertices $\{u, u'\} \subseteq U$ and $\{v, v'\} \subseteq V$ such that all four edges $(u,v)$, $(u,v')$, $(u',v)$, and $(u',v')$ exist in $E$. Let $k$ denote the number of negative edges (i.e., edges with sign $0$) in the butterfly. The butterfly is said to be \textbf{balanced} if $k$ is even, otherwise \textbf{unbalanced}. Fig.~\ref{figure:Fig_Two} represents the balanced and unbalanced butterflies.

\begin{figure}[ht]
\centering
\begin{tikzpicture}[scale=0.9, line width=0.3pt]

    \tikzset{
      blackdot/.style={circle, draw=black, fill=blue!40, minimum size=6pt, inner sep=0pt},
      whitedot/.style={circle, draw=black, fill=red!40, minimum size=6pt, inner sep=0pt},
      edge/.style={line width=0.8pt, shorten >=1pt, shorten <=1pt}
    }

\draw[dotted, thick] (-0.3,1.3) rectangle (8.5,-1.4);


\node[blackdot,label=below:$u_k$] (a_u2) at (0,0) {};
\node[blackdot,label=above:$u_i$] (a_u1) at (0,0.7) {};
\node[whitedot,label=below:$v_\ell$] (a_v2) at (0.7,0) {};
\node[whitedot,label=above:$v_j$] (a_v1) at (0.7,0.7) {};
\draw[thick] (a_u2)--(a_v2);
\draw[thick] (a_u2)--(a_v1);
\draw[thick] (a_u1)--(a_v2);
\draw[thick] (a_u1)--(a_v1);
\node at (0.35,-0.75) {(a)};

\node[blackdot,label=below:$u_k$] (b_u2) at (1.2,0) {};
\node[blackdot,label=above:$u_i$] (b_u1) at (1.2,0.7) {};
\node[whitedot,label=below:$v_\ell$] (b_v2) at (1.9,0) {};
\node[whitedot,label=above:$v_j$] (b_v1) at (1.9,0.7) {};
\draw[thick] (b_u2)--(b_v2);
\draw[red,thick] (b_u2)--(b_v1);
\draw[red,thick] (b_u1)--(b_v2);
\draw[thick] (b_u1)--(b_v1);
\node at (1.55,-0.75) {(b)};

\node[blackdot,label=below:$u_k$] (c_u2) at (2.4,0) {};
\node[blackdot,label=above:$u_i$] (c_u1) at (2.4,0.7) {};
\node[whitedot,label=below:$v_\ell$] (c_v2) at (3.1,0) {};
\node[whitedot,label=above:$v_j$] (c_v1) at (3.1,0.7) {};
\draw[red,thick] (c_u2)--(c_v2);
\draw[red,thick] (c_u2)--(c_v1);
\draw[thick] (c_u1)--(c_v2);
\draw[thick] (c_u1)--(c_v1);
\node at (2.75,-0.75) {(c)};

\node[blackdot,label=below:$u_k$] (d_u2) at (3.6,0) {};
\node[blackdot,label=above:$u_i$] (d_u1) at (3.6,0.7) {};
\node[whitedot,label=below:$v_\ell$] (d_v2) at (4.3,0) {};
\node[whitedot,label=above:$v_j$] (d_v1) at (4.3,0.7) {};
\draw[red,thick] (d_u2)--(d_v2);
\draw[thick] (d_u2)--(d_v1);
\draw[red,thick] (d_u1)--(d_v2);
\draw[thick] (d_u1)--(d_v1);
\node at (3.95,-0.75) {(d)};

\node[blackdot,label=below:$u_k$] (e_u2) at (4.8,0) {};
\node[blackdot,label=above:$u_i$] (e_u1) at (4.8,0.7) {};
\node[whitedot,label=below:$v_\ell$] (e_v2) at (5.5,0) {};
\node[whitedot,label=above:$v_j$] (e_v1) at (5.5,0.7) {};
\draw[red,thick] (e_u2)--(e_v2);
\draw[red,thick] (e_u2)--(e_v1);
\draw[red,thick] (e_u1)--(e_v2);
\draw[red,thick] (e_u1)--(e_v1);
\node at (5.15,-0.75) {(e)};

\draw[dotted,thick] (5.8,1.3)--(5.8,-1.4);


\node[blackdot,label=below:$u_k$] (f_u2) at (6.2,0) {};
\node[blackdot,label=above:$u_i$] (f_u1) at (6.2,0.7) {};
\node[whitedot,label=below:$v_\ell$] (f_v2) at (6.9,0) {};
\node[whitedot,label=above:$v_j$] (f_v1) at (6.9,0.7) {};
\draw[red,thick] (f_u2)--(f_v2);
\draw[thick] (f_u2)--(f_v1);
\draw[thick] (f_u1)--(f_v2);
\draw[thick] (f_u1)--(f_v1);
\node at (6.55,-0.75) {(f)};

\node[blackdot,label=below:$u_k$] (g_u2) at (7.4,0) {};
\node[blackdot,label=above:$u_i$] (g_u1) at (7.4,0.7) {};
\node[whitedot,label=below:$v_\ell$] (g_v2) at (8.1,0) {};
\node[whitedot,label=above:$v_j$] (g_v1) at (8.1,0.7) {};
\draw[red,thick] (g_u2)--(g_v2);
\draw[red,thick] (g_u2)--(g_v1);
\draw[red,thick] (g_u1)--(g_v2);
\draw[thick] (g_u1)--(g_v1);
\node at (7.75,-0.75) {(g)};

\node at (2.7,-1.15) {\textbf{Balanced}};
\node at (7.0,-1.15) {\textbf{Unbalanced}};

\end{tikzpicture}
\caption{Illustration of balanced and unbalanced butterflies.}
\label{figure:Fig_Two}
\end{figure}

\end{definition}

\begin{definition}[\textbf{Vertex Priority}~\cite{wang2019vertex}]
Let $a,b \in \{U \cup V\}$ be two vertices. We say that vertex $a$ has higher priority than vertex $b$, denoted by $p(a) > p(b)$, if one of the following conditions holds:
\begin{enumerate}
    \item $degree(a) > degree(b)$;
   \item $id(a)>id(b)$, if $degree(a)=degree(b)$
\end{enumerate}
Here, $id(a)$ is the vertex ID of $a$.
\label{def-priority}
\end{definition}

\begin{definition}[\textbf{Wedge ($\lor$)}~\cite{wang2019vertex}]
Given a bipartite graph $G=(U,V,E)$, let $u_i,u_k \in U$ and $v_j \in V$. 
A wedge $\lor(u_i,v_j,u_k)$ is a path of length two that starts at $u_i$, passes through $v_j$, and ends at $u_k$, as illustrated in Fig.~\ref{fig:wedges}(a). 
We assume an ordering on vertices in $U$ such that $p(u_i) > p(u_k)$.
\label{def:wedge}
\end{definition}

\begin{definition}[\textbf{Symmetric Wedge ($\lor^s$)}~\cite{kiran2024efficient}]
A wedge $\lor(u_i,v_j,u_k)$ in a signed bipartite graph $G=(U,V,E)$ is called a \textbf{symmetric wedge} if the two edges $(u_i,v_j)$ and $(u_k,v_j)$ have the same sign, i.e., both are positive or both are negative. Such a wedge is denoted by $\lor^s(u_i,v_j,u_k)$ and is illustrated in Fig.~\ref{fig:wedges}(b).
\label{def:wedgesym}
\end{definition}

\begin{definition}[\textbf{Asymmetric Wedge ($\lor^a$)}~\cite{kiran2024efficient}]
A wedge $\lor(u_i,v_j,u_k)$ in a signed bipartite graph $G=(U,V,E)$ is called an \textbf{asymmetric wedge} if the two edges $(u_i,v_j)$ and $(u_k,v_j)$ have different signs, i.e., one is positive and the other is negative. Such a wedge is denoted by $\lor^a(u_i,v_j,u_k)$ and is illustrated in Fig.~\ref{fig:wedges}(c).
\label{def:wedgeasym}
\end{definition}

\begin{figure}[ht]
\centering

\tikzset{
  leftvertex/.style={
    circle,
    draw=black,
    fill=blue!40,
    minimum size=6pt,
    inner sep=0pt,
    outer sep=0pt
  },
  rightvertex/.style={
    circle,
    draw=black,
    fill=red!40,
    minimum size=6pt,
    inner sep=0pt,
    outer sep=0pt
  },
  edge/.style={
    line width=0.8pt,
    shorten >=1pt,
    shorten <=1pt
  },
  signededge/.style={
    edge,
    dashed,
    red
  }
}

\begin{subfigure}[t]{0.28\linewidth}
\centering
\begin{tikzpicture}

\node[leftvertex,label=above:{$u_i$}] (u1) at (0,0.4) {};
\node[leftvertex,label=below:{$u_k$}] (u2) at (0,-0.4) {};
\node[rightvertex,label=right:{$v_j$}] (v1) at (0.85,0) {};

\draw[edge,gray] (u1) -- (v1);
\draw[edge,gray] (u2) -- (v1);

\end{tikzpicture}
\caption{Unsigned \\ wedge}
\end{subfigure}
\hfill
\begin{subfigure}[t]{0.40\linewidth}
\centering
\begin{tikzpicture}

\node[leftvertex,label=above:{$u_i$}] (a1) at (0,0.4) {};
\node[leftvertex,label=below:{$u_k$}] (a2) at (0,-0.4) {};
\node[rightvertex,label=right:{$v_j$}] (v1) at (0.75,0) {};

\draw[signededge] (a1) -- (v1);
\draw[signededge] (a2) -- (v1);

\node[leftvertex,label=above:{$u_i$}] (b1) at (1.75,0.4) {};
\node[leftvertex,label=below:{$u_k$}] (b2) at (1.75,-0.4) {};
\node[rightvertex,label=right:{$v_j$}] (v2) at (2.50,0) {};

\draw[edge,gray] (b1) -- (v2);
\draw[edge,gray] (b2) -- (v2);

\end{tikzpicture}
\caption{Symmetric wedges}
\end{subfigure}
\hfill
\begin{subfigure}[t]{0.28\linewidth}
\centering
\begin{tikzpicture}

\node[leftvertex,label=above:{$u_i$}] (u1) at (0,0.4) {};
\node[leftvertex,label=below:{$u_k$}] (u2) at (0,-0.4) {};
\node[rightvertex,label=right:{$v_j$}] (v1) at (0.85,0) {};

\draw[edge,gray] (u1) -- (v1);
\draw[signededge] (u2) -- (v1);

\end{tikzpicture}
\caption{Asymmetric wedge}
\end{subfigure}

\caption{Wedge structures in unsigned and signed bipartite graphs. 
Gray solid edges represent positive edges, while red dashed edges represent negative edges.}
\label{fig:wedges}

\end{figure}


We further propose a lemma to establish the
significance of these wedges.

\begin{lemma}
Any balanced butterfly in a signed bipartite graph can be formed either using a pair of symmetric wedges or a pair of asymmetric wedges, but not both.
\label{lemma:lem-1}
\end{lemma}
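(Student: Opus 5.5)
Fix a balanced butterfly on $\{u,u'\}\subseteq U$ and $\{v,v'\}\subseteq V$, labelled so that $p(u)>p(u')$, as Definition~\ref{def:wedge} requires. Exactly two wedges in the sense of Definition~\ref{def:wedge} then join $u$ and $u'$, namely $\lor(u,v,u')$ and $\lor(u,v',u')$. Their edges partition the four butterfly edges. So the butterfly is exactly the pair of wedges sharing the endpoints $u,u'$ and centered at $v$ and $v'$. The claim then reduces to showing that these two wedges are of the same type, i.e.\ both symmetric or both asymmetric.

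To show this I would encode signs as parities. Set $s(e)=1$ if $e\in E^-$ and $s(e)=0$ if $e\in E^+$. A wedge $\lor(u,x,u')$ is asymmetric (Definition~\ref{def:wedgeasym}) exactly when $s(u,x)+s(u',x)\equiv 1 \pmod 2$, and symmetric (Definition~\ref{def:wedgesym}) exactly when this sum is $0$. The number of negative edges in the butterfly is
\[
k = \bigl(s(u,v)+s(u',v)\bigr) + \bigl(s(u,v')+s(u',v')\bigr).
\]
By the balance condition, $k$ is even. Hence the two bracketed parities are equal, so the two wedges are either both symmetric or both asymmetric.

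For ``but not both'', each wedge is exactly one of the two types, since the two definitions are complementary. The butterfly determines its pair of wedges uniquely once the $U$-side pair is fixed. It therefore cannot be formed by a symmetric pair and also by an asymmetric pair. A mixed pair is impossible by the parity argument, since it would give $k$ odd.

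The only step needing care is the uniqueness of the wedge decomposition. One must note that the wedges are taken with their endpoints in $U$, as Definition~\ref{def:wedge} fixes, so the decomposition of the butterfly into two wedges is canonical. If wedges with endpoints in $V$ were also allowed, the same parity argument would apply verbatim to that decomposition. So I would state that the lemma refers to the $U$-endpoint wedges, which is consistent with the paper's definitions. Everything else is a direct parity calculation.
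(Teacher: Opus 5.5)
Your proof is correct and is essentially the paper's argument: the paper splits into the three type combinations and counts negative edges ($0$, $2$ or $4$; exactly $2$; $1$ or $3$), which is just your identity $k=\bigl(s(u,v)+s(u',v)\bigr)+\bigl(s(u,v')+s(u',v')\bigr)$ evaluated case by case modulo $2$, and that identity also gives the converse (any same-type pair closes to a balanced butterfly) on which the proof of Theorem~\ref{thm:correctness} relies. Your restriction to $U$-endpoint wedges is in fact necessary for the uniqueness reading of ``not both'', not merely a convention: with $s(u,v)=s(u',v)=1$ and $s(u,v')=s(u',v')=0$, the $U$-endpoint wedges form a symmetric pair while the $V$-endpoint wedges form an asymmetric pair, whereas the paper reads ``not both'' only as excluding a mixed pair.
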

\begin{proof}
Considering a signed bipartite graph $ G = (U, V, E)$, let $butterfly({u,w,v,x})$ represent a balanced butterfly in $G$ in which $u,w \in U$ and $v,x \in V$ with two wedges $\lor(u,v,w)$ and $\lor(u,x,w)$.

\noindent\textbf{case-1.}~$\lor^{s}(u,v,w)$ and $\lor^{s}(u,x,w)$. In this case, the number of negative edges is $0$, $2$, or $4$, thus even.

\noindent\textbf{case-2.}~$\lor^{a}(u,v,w)$ and $\lor^{a}(u,x,w)$. In this case, the number of negative edges is $2$, thus even.

\noindent\textbf{case-3.}~$\lor^{s}(u,v,w)$ and $\lor^{a}(u,x,w)$. In this case, the number of negative edges is $1$ or $3$, thus odd. So, this combination cannot make a balanced butterfly.

This completes the proof.
\end{proof} 

\noindent\textbf{Problem Statement.}~Given a signed bipartite graph $G=(U, V, E)$, the problem is defined as determining the total number of balanced butterflies in $G$.

\section{algorithms}
\label{Sec:alg}

We compare \texttt{D-BBC} against two baselines: a state-of-the-art serial
algorithm and a distributed algorithm adapted from prior work on unsigned
graphs.

\subsubsection{Serial Baseline: \texttt{BB2K}}
As research on balanced butterfly counting in signed bipartite graphs remains limited, we adopt our prior serial algorithm ~\cite{kiran2024efficient}, denoted \textbf{\texttt{BB2K}} (where $k=2$), as our first baseline. We extended this work to a multi-core algorithm, \textbf{\texttt{M-BBC}}, for exact counting of balanced butterflies in large-scale signed graphs, exploiting fine-grained parallelism to accelerate butterfly counting. (The full version is in ~\cite{kiran2026multi}).


\subsubsection{Distributed Baseline: \texttt{S-Monarch}}
To establish a distributed baseline, we adapt \texttt{Monarch}~\cite{tang2024monarch}, a distributed butterfly-counting algorithm originally designed for \emph{unsigned} bipartite graphs. We extend its counting procedure to account for edge signs, thereby identifying balanced butterflies in signed bipartite graphs; we refer to this adapted version as \textbf{\texttt{S-Monarch}}. While \texttt{S-Monarch} supports distributed balanced butterfly counting, its enumeration does not exploit the sign structure to prune unbalanced candidates early. As a result, it performs unnecessary work on candidate butterflies that cannot contribute to the balanced count, particularly in graphs with a high fraction of unbalanced butterflies.


To mitigate the above mentioned limitations, in the following section we present our proposed distributed algorithm \texttt{D-BBC}.

\subsection{Distributed Balanced Butterfly Counting Algorithm: D-BBC}

The proposed Distributed Balanced Butterfly Counting (\texttt{D-BBC}) framework consists of five sequential phases. Each phase addresses a key challenge in distributed graph processing, including scalable graph loading, global graph representation, workload balancing, communication-efficient local subgraph construction, and parallel balanced butterfly counting. The following subsections describe each phase in detail.

\definecolor{compcol}{RGB}{29,158,117}   
\definecolor{commcol}{RGB}{216,90,48}     
\definecolor{compcol}{RGB}{29,158,117}   
\definecolor{commcol}{RGB}{216,90,48}     
\definecolor{rankA}{RGB}{55,138,221}      
\definecolor{rankB}{RGB}{29,158,117}      
\definecolor{rankC}{RGB}{186,117,23}      
 
\begin{figure}
    \centering
\begin{tikzpicture}[
    font=\small, >={Stealth[length=1.6mm]},
    comp/.style  = {rounded corners=2pt, draw=compcol!80!black, fill=compcol!14, line width=0.6pt, align=center, minimum height=5mm, inner sep=2pt},
    comm/.style  = {rounded corners=2pt, draw=commcol!85!black, fill=commcol!16, line width=0.7pt, align=center, minimum height=5mm, inner sep=2pt},
    thr/.style   = {rounded corners=1.5pt, draw=compcol!80!black, fill=compcol!14, line width=0.5pt, align=center, minimum width=3mm, minimum height=4mm, inner sep=1pt},
    plabel/.style= {align=left, font=\footnotesize\itshape, text=black!65},
    note/.style  = {align=center, font=\footnotesize, text=black!62},
    fl/.style    = {->, line width=0.6pt, draw=black!55},
    bus/.style   = {line width=0.5pt, draw=black!55},
]
 
 
\node[comm, minimum width=62mm, minimum height=6.5mm] (file1) at (2.3,0.5)
  {\textbf{\footnotesize Input signed bipartite graph}\\[-2pt]{\footnotesize $|L|{+}|R|$ vertices,\; $|E|$ edges,\; $N$ bytes total}};
\draw[fl] (file1.south) -- ++(0,-6mm);
 
\node[note, anchor=west,font=\footnotesize] at (0.75,-0.22) {Byte-offset partitioning \\ ($N$ bytes\; $\rightarrow$\; ${\approx}\,N/P$ bytes per rank)};
 
\def\leftx{-0.8}
\def\rightx{5.4}
\def\sy{-0.85}
\def\hh{0.25}      

\fill[rankA!22] (\leftx,\sy-\hh) rectangle (1.0,\sy+\hh);
\fill[rankB!22] (1.0,\sy-\hh) rectangle (2.6,\sy+\hh);
\fill[rankC!24] (4.0,\sy-\hh) rectangle (\rightx,\sy+\hh);

\draw[rounded corners=2pt, draw=black!55, line width=0.7pt]
(\leftx,\sy-\hh) rectangle (\rightx,\sy+\hh);

\draw[black!45] (1.0,\sy-\hh)--(1.0,\sy+\hh);
\draw[black!45] (2.6,\sy-\hh)--(2.6,\sy+\hh);
\draw[black!45] (4.0,\sy-\hh)--(4.0,\sy+\hh);

\node[font=\small] at (0.3,\sy) {$[0,b_1)$};
\node[font=\small] at (1.7,\sy) {$[b_1,b_2)$};

\node[font=\large] at (3.0,\sy) {$\cdots$};

\node[font=\small] at (4.7,\sy) {$[b_{P-1},N)$};

\coordinate (b0) at (0.3,\sy-\hh);
\coordinate (b1) at (1.7,\sy-\hh);
\coordinate (b2) at (4.3,\sy-\hh);
 
\node[
    comp,minimum width=7mm, fill=rankA!22, draw=rankA!80!black
] (e0) at (0.5,-1.8) {Rank0};

\node[
    comp,
    minimum width=17mm,
    fill=rankB!22,
    draw=rankB!70!black
] (e1) at (2.2,-1.8) {$Rank_1$};

\node[note,font=\large] at (3.4,-1.8) {$\cdots$};

\node[
    comp,
    minimum width=7mm,
    fill=rankC!24,
    draw=rankC!80!black
] (e2) at (4.5,-1.8){$Rank_{P-1}$};

\draw[->, line width=0.8pt, draw=rankA!80!black] (b0) -- (b0 |- e0.north);
\draw[->, line width=0.8pt, draw=rankB!70!black] (b1) -- (b1 |- e1.north);
\draw[->, line width=0.8pt, draw=rankC!80!black] (b2) -- (b2 |- e2.north);
 
\node[note ,font=\footnotesize] at (4.9,-2.62) {Partition into \\ parsing ranges \\ (Multi-threaded compute) };
\node[
    thr,
    fill=gray!20,
    draw=gray!70
] (t0) at (1.6,-2.55) {\footnotesize $T_0$};

\node[
    thr,
    fill=gray!20,
    draw=gray!70
] (t1) at (2.0,-2.55) {\footnotesize $T_1$};

\node[
    thr,
    fill=gray!20,
    draw=gray!70
] (t2) at (2.95,-2.55) {\footnotesize $T_{k-1}$};

\node[font=\footnotesize,text=gray!70] at ($(t1)!0.5!(t2)$) {$\cdots$};
\draw[fl] (e1.south -| t0) -- (t0.north);
\draw[fl] (e1.south -| t1) -- (t1.north);
\draw[fl] (e1.south -| t2) -- (t2.north);
 
\node[comp, minimum width=8mm] (m1) at (2.3,-3.5) {\textbf{\footnotesize Parsed local edge list}\\[-2pt]{\footnotesize ($Rank_1$)}};
\coordinate (jy) at ($(t1.south)+(0,-1.6mm)$);
\draw[bus] (t0.south) |- (jy);
\draw[bus] (t2.south) |- (jy);
\draw[bus] (t1.south) -- (jy);
\draw[fl] (t1.south -| m1.north) ++(0,-1.6mm) -- (m1.north);

\end{tikzpicture}
    \caption{Phase 1}
    \label{fig:phase1}
\end{figure}

\subsubsection{Phase 1: Parallel Graph Loading}
\label{phase1}
As illustrated in Fig.~\ref{fig:phase1}, the proposed D-BBC framework
begins with a \emph{parallel graph loading} phase, in which the input signed bipartite graph is partitioned into multiple byte ranges, allowing each MPI process to independently read a distinct portion of the graph file in parallel. To ensure correctness, the boundaries of
each byte range are aligned to complete edge records, preventing graph edges from being split across adjacent MPI processes. Consequently, every edge is assigned to exactly one MPI process, ensuring that the entire graph is loaded without duplication.

After reading its assigned graph chunk, each MPI process further divides its local data into multiple parsing ranges proportional to the chunk size. These parsing ranges are processed concurrently using Intel Threading Building Blocks (TBB), where each thread independently
parses its assigned byte range and stores the extracted edges in a private thread-local buffer. Upon completion of all parsing tasks, the thread-local buffers are merged to construct the parsed local edge list for the corresponding MPI process. By combining inter-process
parallel file access with intra-process multithreaded parsing, the hybrid MPI-TBB strategy exploits both distributed-memory and shared-memory parallelism, alleviating the I/O bottleneck of single-process graph loading.

\begin{figure}[t]
\centering
\resizebox{1.0\columnwidth}{!}{%
\begin{tikzpicture}[
    font=\small, >={Stealth[length=1.2mm]},
    comp/.style  = {rounded corners=2pt, draw=compcol!80!black, fill=compcol!14, line width=0.6pt, align=center, minimum width=15mm, minimum height=3mm, inner sep=1.5pt},
    comm/.style  = {rounded corners=2pt, draw=commcol!85!black, fill=commcol!16, line width=0.7pt, align=center, minimum height=3mm, inner sep=1.5pt},
    plabel/.style= {align=left, font=\footnotesize\itshape, text=black!65},
    note/.style  = {align=center, font=\footnotesize, text=black!62},
    fl/.style    = {->, line width=0.35pt, draw=black!70!black},
    fll/.style   = {->, line width=0.35pt, draw=black!70!black},
]
\def\cA{0}\def\cB{2.3}\def\cC{4.6}

\node[note, font=\scriptsize] at (\cA,3.30) {\textbf{Rank$_0$}};
\node[note, font=\scriptsize] at (\cB,3.30) {\textbf{Rank$_1$}};
\node[note, font=\scriptsize] at (\cC,3.30) {\textbf{Rank$_{p-1}$}};
\node[note] at (3.45,2.85) {$\cdots$};

\node[comp, fill=rankA!22, draw=rankA!80!black] (a0) at (\cA,2.85) {\scriptsize Parsed local\\[-2pt]\scriptsize edge list};
\node[comp] (a1) at (\cB,2.85) {\scriptsize Parsed local\\[-2pt]\scriptsize edge list};
\node[comp, fill=rankC!24, draw=rankC!80!black] (a2) at (\cC,2.85) {\scriptsize Parsed local\\[-2pt]\scriptsize edge list};

\node[comp, fill=rankA!22, draw=rankA!80!black] (e0) at (\cA,1.9) {\scriptsize partial degree($v$),\\[-2pt]\scriptsize $\mathrm{owner}(v)$};
\node[comp] (e1) at (\cB,1.9) {\scriptsize partial degree($v$),\\[-2pt]\scriptsize $\mathrm{owner}(v)$};
\node[comp, fill=rankC!24, draw=rankC!80!black] (e2) at (\cC,1.9) {\scriptsize partial degree($v$),\\[-2pt]\scriptsize $\mathrm{owner}(v)$};
\draw[fl] (a0.south) -- (e0.north);
\draw[fl] (a1.south) -- (e1.north);
\draw[fl] (a2.south) -- (e2.north);

\node[comm, minimum width=4mm] (route) at (\cB,1.0)
  {\footnotesize \textsc{MPI\_Alltoallv}};
\node[note, font=\scriptsize] at (0,1.0) { Route vertex metadata \\[-2pt]
(owner($v$), partial degree($v$))};
  
\draw[fll] (e0.south) -- (route.north);
\draw[fll] (e1.south) -- (route.north);
\draw[fll] (e2.south) -- (route.north);

\node[comp, minimum width=4mm] (agg) at (\cB,0.10)
  {\scriptsize Aggregate partial degrees\\[-1pt]\scriptsize and computes global degree($v$)};
\node[note, font=\large] (vd1) at (-0.55,0.10) {$\vdots$};
\node[note, font=\large] (vd2) at (5,0.10) {$\vdots$};
\draw[fll] (route.south) -- (agg.north);

\node[comm, minimum width=4mm] (mail) at (\cB,-0.7)
  {\scriptsize \textsc{MPI\_Alltoallv}};
  
\draw[fll] (agg.south) -- (mail.north);
\draw[gray,fll,dotted] (route.south) -- ($(vd1.north)+(0,-2mm)$);
\draw[gray,fll,dotted] (route.south) -- ($(vd2.north)+(0,-2mm)$);

\node[note, font=\scriptsize] at (\cB+2,-0.7) { Return vertex degrees \\[-2pt]  back to requesters};


\node[comp, fill=rankA!22, draw=rankA!80!black] (g0) at (\cA,-1.6) {\scriptsize global degree($v$),\\[-1pt]\scriptsize $\mathrm{owner}(v)$};
\node[comp] (g1) at (\cB,-1.6) {\scriptsize global degree($v$),\\[-1pt]\scriptsize $\mathrm{owner}(v)$};
\node[comp, fill=rankC!24, draw=rankC!80!black] (g2) at (\cC,-1.6) {\scriptsize global degree($v$),\\[-1pt]\scriptsize $\mathrm{owner}(v)$};
\draw[fll] (mail.south) -- (g0.north);
\draw[fll] (mail.south) -- (g1.north);
\draw[fll] (mail.south) -- (g2.north);
\end{tikzpicture}%
}
\caption{Phase~2.}
\label{fig:phase2}
\end{figure}

\subsubsection{Phase 2: Global Degree Computation}
Since the input graph is distributed across MPI processes through byte-offset partitioning, the same vertex may be referenced by multiple processes, and vertex identifiers are not globally contiguous. Each process first deduplicates its local vertices and
computes their partial degrees. Every distinct vertex is then routed, through a single all-to-all exchange, to its owner process, defined as $\mathrm{owner}(v)=h(v)\bmod P$, where $h(\cdot)$ is a deterministic
multiplicative hash function and $P$ denotes the number of MPI processes. Each owner process consolidates duplicate occurrences of each vertex, aggregates the received partial degrees to compute its global degree, and assigns the vertex a unique compact identifier. A
lightweight collective communication is then performed to establish globally contiguous identifier ranges across all owner processes, ensuring that every vertex is assigned a globally unique identifier. Finally, each owner process returns the generated indexing information
only to the processes that originally referenced the corresponding vertex through a second all-to-all exchange. As illustrated in Fig.~\ref{fig:phase2}, the resulting distributed graph index provides globally consistent vertex identifiers and degrees, while each MPI
process stores indexing information only for the vertices appearing in its local graph partition.

\subsubsection{Phase 3: Workload Distribution}
As illustrated in Fig.~\ref{fig:phase3}, the objective of this phase is to balance the computational workload among MPI processes before distributed butterfly counting begins. Since the computational cost of processing a pivot vertex depends on the sizes of the neighborhoods visited during wedge enumeration, each process first estimates the workload of the pivot vertices appearing in its local graph partition.
The workload of a pivot vertex $u$ is estimated as
\begin{equation}
W(u)=\sum_{v\in\Gamma(u)} d(v),
\label{eq:workload}
\end{equation}
where $\Gamma(u)$ denotes the neighbors of $u$, and $d(v)$ is the degree of neighbor $v$. This metric approximates the computational effort required to process each pivot vertex during the local butterfly
counting phase. Each MPI process computes partial workload contributions for the pivot vertices referenced in its local partition. Since the same pivot vertex may appear in multiple partitions, the partial workload information is
routed to the corresponding owner process through an \textsc{MPI\_Alltoallv} communication. Each owner process aggregates the received partial workloads to obtain the global workload estimate for every pivot vertex under its ownership (as shown in example 1).

\begin{figure}[t]
\centering
\resizebox{1.0\columnwidth}{!}{%
\begin{tikzpicture}[
    font=\small, >={Stealth[length=1.2mm]},
    comp/.style  = {rounded corners=2pt, draw=compcol!80!black, fill=compcol!14, line width=0.6pt, align=center, minimum width=15mm, minimum height=3mm, inner sep=1.5pt},
    comm/.style  = {rounded corners=2pt, draw=commcol!85!black, fill=commcol!16, line width=0.7pt, align=center, minimum height=3mm, inner sep=1.5pt},
    plabel/.style= {align=left, font=\footnotesize\itshape, text=black!65},
    note/.style  = {align=center, font=\footnotesize, text=black!62},
    fl/.style    = {->, line width=0.35pt, draw=black!70!black},
    fll/.style   = {->, line width=0.35pt, draw=black!70!black},
]
\def\cA{0}\def\cB{2.3}\def\cC{4.6}

\node[note, font=\scriptsize] at (\cA,3.30) {\textbf{Rank$_0$}};
\node[note, font=\scriptsize] at (\cB,3.30) {\textbf{Rank$_1$}};
\node[note, font=\scriptsize] at (\cC,3.30) {\textbf{Rank$_{p-1}$}};
\node[note] at (3.45,2.85) {$\cdots$};

\node[comp, fill=rankA!22, draw=rankA!80!black] (a0) at (\cA,2.85) {\scriptsize Local graph\\[-2pt]\scriptsize global degrees};
\node[comp] (a1) at (\cB,2.85) {\scriptsize  Local graph\\[-2pt]\scriptsize global degrees};
\node[comp, fill=rankC!24, draw=rankC!80!black] (a2) at (\cC,2.85) {\scriptsize Local graph\\[-2pt]\scriptsize global degrees};

\node[comp, fill=rankA!22, draw=rankA!80!black] (e0) at (\cA,1.9) {\scriptsize estimate partial
,\\[-2pt]\scriptsize vertex workload};
\node[comp] (e1) at (\cB,1.9) {\scriptsize estimate partial
,\\[-2pt]\scriptsize vertex workload};
\node[comp, fill=rankC!24, draw=rankC!80!black] (e2) at (\cC,1.9) {\scriptsize estimate partial
,\\[-2pt]\scriptsize vertex workload};
\draw[fl] (a0.south) -- (e0.north);
\draw[fl] (a1.south) -- (e1.north);
\draw[fl] (a2.south) -- (e2.north);

\node[note,font=\scriptsize] at (0,1.0) { Route vertex metadata \\[-2pt] (partial workload ($v$) \\ [-2pt] \scriptsize to owners)};

\node[comm, minimum width=4mm] (agg) at (\cB,1.0)
  { \scriptsize \textsc{MPI\_Alltoallv}};

\draw[fll] (e0.south) -- (agg.north);
\draw[fll] (e1.south) -- (agg.north);
\draw[fll] (e2.south) -- (agg.north);

\node[comp, minimum width=4mm] (agg) at (\cB,0.15)
  {\scriptsize Aggregate Partial Workloads \\ [-2pt] \scriptsize
Compute Global Vertex Workloads};
\node[note, font=\scriptsize] (vd1) at (-0.55,0.30) {$\vdots$};
\node[note, font=\scriptsize] (vd2) at (5,0.30) {$\vdots$};
\draw[fll] (route.south) -- (agg.north);

\node[comm, minimum width=4mm] (mail) at (\cB,-0.75)
  {\scriptsize \textsc{MPI\_Gatherv}};
 \draw[fll] (agg.south) -- (mail.north); 
\draw[gray,fll,dotted] (route.south) -- ($(vd1.north)+(0,-2mm)$);
\draw[gray,fll,dotted] (route.south) -- ($(vd2.north)+(0,-2mm)$);

\draw[gray,fll,dotted] ($(vd1.south)+(-0,2mm)$) -- (mail.north);
\draw[gray,fll,dotted] ($(vd2.south)+(-0,2mm)$) -- (mail.north);

\node[comp, fill=rankA!22, draw=rankA!80!black] (g4) at (\cB,-1.8) {\scriptsize  $Rank_0$,\\[-2pt]\scriptsize Greedy bin-packing \\[-2pt]\scriptsize (Load balancing)};
\draw[fll] (mail.south) -- (g4.north);

\node[comp, fill=rankA!22, draw=rankA!80!black] (g0) at (\cA,-2.9) {\scriptsize owner's assigned \\[-2pt] \scriptsize ranks};
\node[comp] (g1) at (\cB,-2.9) {\scriptsize owner's assigned \\[-2pt] \scriptsize ranks};
\node[comp, fill=rankC!24, draw=rankC!80!black] (g2) at (\cC,-2.9) {\scriptsize owner's assigned \\[-2pt] \scriptsize ranks};
\draw[fll] (g4.south) -- (g0.north);
\draw[fll] (g4.south) -- (g1.north);
\draw[fll] (g4.south) -- (g2.north);

\node[comm, minimum width=4mm] (agg1) at (\cB,-3.7)
  { \scriptsize \textsc{MPI\_Alltoallv}};

  \draw[fll] (g0.south) -- (agg1.north);
\draw[fll] (g1.south) -- (agg1.north);
\draw[fll] (g2.south) -- (agg1.north);

\node[comp, fill=rankA!22, draw=rankA!80!black] (g00) at (\cA,-4.6) {\scriptsize pivot owner \\[-2pt] \scriptsize per vertex};
\node[comp] (g10) at (\cB,-4.6) {\scriptsize pivot owner \\[-2pt] \scriptsize per vertex};
\node[comp, fill=rankC!24, draw=rankC!80!black] (g20) at (\cC,-4.6) {\scriptsize pivot owner \\[-2pt] \scriptsize per vertex};

\node[note, font=\scriptsize] at (\cB+2.5,-3.7) { Return vertex loads \\[-2pt]  back to requesters};

\node[note, font=\scriptsize] at (\cB+2,-2.2) { \textsc{MPI\_Scatterv}};

\draw[fll] (agg1.south) -- (g00.north);
\draw[fll] (agg1.south) -- (g20.north);
\draw[fll] (agg1.south) -- (g10.north);

\end{tikzpicture}%
}
\caption{Phase~3.}
\label{fig:phase3}
\end{figure}

The aggregated workload is then collected at Rank$_0$ using \textsc{MPI\_Gatherv}. Since workload distribution is performed only once prior to the counting phase, the scheduling overhead is negligible
compared with the overall execution time. Therefore, a centralized scheduler is employed to simplify workload management and to construct a globally balanced assignment without requiring iterative coordination
among MPI processes. Based on the global workload information, Rank$_0$ performs a greedy bin-packing strategy by sorting pivot vertices in descending order of workload and repeatedly assigning each pivot vertex
to the MPI process with the minimum accumulated workload (as shown in example 2). The resulting pivot assignments are distributed to the owner processes using \textsc{MPI\_Scatterv}. Since multiple processes may reference the same pivot vertex, each owner process then forwards its assigned MPI rank to the corresponding requesters via a second \textsc{MPI\_Alltoallv} communication.

At the end of this phase, every MPI process knows the assigned process of each pivot vertex required for subsequent computation. This globally consistent, workload-aware pivot distribution enables balanced
computation while minimizing idle time during the distributed counting phase.

\textbf{Example 1 (Workload estimation):} 
\label{ex:we}
Consider a pivot vertex $u$ with neighbors $\Gamma(u)=\{v_1,v_2,v_3\}$ having degrees $4$, $6$, and $5$, respectively. Using Eq.~(\ref{eq:workload}), its workload is
\[
W(u)=d(v_1)+d(v_2)+d(v_3)=4+6+5=15.
\]
After aggregating workloads from all MPI processes, the greedy scheduler assigns this pivot vertex to the MPI process with the minimum accumulated workload, thereby improving load balance across the distributed system.

\textbf{Example 2 (Greedy Bin-Packing):}
\label{ex:gbp}

Consider six pivot vertices with workloads
$\{15,12,9,7,6,4\}$ to be assigned to $P=3$ MPI processes.
Using a naive vertex-based assignment, where consecutive pivot vertices are assigned to each process without considering their computational costs, the resulting assignments are $R_0=\{u_1,u_2\}$, $R_1=\{u_3,u_4\}$, and $R_2=\{u_5,u_6\}$. The corresponding workloads are $27$, $16$, and $10$, respectively, yielding a straggler ratio of \textbf{$27/10 = 2.7$}. This indicates a significant workload imbalance: the process assigned the heaviest workload becomes the execution bottleneck, while the remaining processes remain underutilized.

In contrast, the proposed workload-aware greedy bin-packing strategy first sorts the pivot vertices in descending order of their estimated workloads and iteratively assigns each pivot to the process currently with the least load. This results in the assignments $R_0=\{u_1,u_6\}$, $R_1=\{u_2,u_5\}$, and $R_2=\{u_3,u_4\}$, with final workloads of $19$, $18$, and $16$, respectively. Consequently, the straggler ratio is reduced to $19/16 \approx 1.19$, demonstrating that the proposed workload-aware assignment substantially improves load balance and minimizes the likelihood of idle processes waiting for the slowest MPI process to complete.

\subsubsection{Phase 4: Distributed Local Subgraph Construction}

\begin{figure}[t]
\centering
\resizebox{1.0\columnwidth}{!}{%
\begin{tikzpicture}[
    font=\small, >={Stealth[length=1.2mm]},
    comp/.style  = {rounded corners=2pt, draw=compcol!80!black, fill=compcol!14, line width=0.6pt, align=center, minimum width=15mm, minimum height=3mm, inner sep=1.5pt},
    comm/.style  = {rounded corners=2pt, draw=commcol!85!black, fill=commcol!16, line width=0.7pt, align=center, minimum height=3mm, inner sep=1.5pt},
    plabel/.style= {align=left, font=\footnotesize\itshape, text=black!65},
    note/.style  = {align=center, font=\footnotesize, text=black!62},
    fl/.style    = {->, line width=0.35pt, draw=black!70!black},
    fll/.style   = {->, line width=0.35pt, draw=black!70!black},
]

\def\cA{0}\def\cB{2.3}\def\cC{4.6}

\node[note, font=\footnotesize] at (\cA,3.40) {\textbf{Rank$_0$}};
\node[note, font=\footnotesize] at (\cB,3.40) {\textbf{Rank$_1$}};
\node[note, font=\footnotesize] at (\cC,3.40) {\textbf{Rank$_{p-1}$}};
\node[note] at (3.45,2.85) {$\cdots$};

\node[comp, fill=rankA!22, draw=rankA!80!black] (a0) at (\cA,2.85) {\footnotesize pivot owner \\[-2pt] \footnotesize per vertex};
\node[comp] (a1) at (\cB,2.85) {\footnotesize pivot owner \\[-2pt] \footnotesize per vertex};
\node[comp, fill=rankC!24, draw=rankC!80!black] (a2) at (\cC,2.85) {\footnotesize pivot owner \\[-2pt] \footnotesize per vertex};

\node[note] at (-0.3,1.8) { Distribute edges according \\ [-2pt] \footnotesize to pivot ownership};

\node[comm, minimum width=4mm] (agg) at (\cB,1.8)
  { \footnotesize \textsc{MPI\_Alltoallv}};

\draw[fll] (a0.south) -- (agg.north);
\draw[fll] (a1.south) -- (agg.north);
\draw[fll] (a2.south) -- (agg.north);

\node[comp, fill=rankA!22, draw=rankA!80!black] (g0) at (\cA-0.5,0.45) {\footnotesize receive all edges  \\[-2pt] \footnotesize of the pivots \\[-2pt] \footnotesize assigned to this process};
\node[comp] (g1) at (\cB,0.45) {\footnotesize receive all edges  \\[-2pt] \footnotesize of the pivots \\[-2pt] \footnotesize assigned to this process};
\node[comp, fill=rankC!24, draw=rankC!80!black] (g2) at (\cC+0.5,0.45) {\footnotesize receive all edges  \\[-2pt] \footnotesize of the pivots \\[-2pt] \footnotesize assigned to this process};
\draw[fll] (agg.south) -- (g0.north);
\draw[fll] (agg.south) -- (g1.north);
\draw[fll] (agg.south) -- (g2.north);

\node[comm, minimum width=4mm] (agg1) at (\cB,-0.8)
  { \footnotesize \textsc{MPI\_Alltoallv}};

  \draw[fll] (g0.south) -- (agg1.north);
\draw[fll] (g1.south) -- (agg1.north);
\draw[fll] (g2.south) -- (agg1.north);

\node[comp, fill=rankA!22, draw=rankA!80!black] (g00) at (\cA,-1.8) {\footnotesize construct \\[-2pt] \footnotesize local subgraph};
\node[comp] (g10) at (\cB,-1.8) {\footnotesize construct \\[-2pt] \footnotesize local subgraph};
\node[comp, fill=rankC!24, draw=rankC!80!black] (g20) at (\cC,-1.8) {\footnotesize construct \\[-2pt] \footnotesize local subgraph};

\node[note] at (\cB+3,-0.8) { Exchange adjacent vertices \\ [-2pt] \footnotesize required for 2-hop \\ [-2pt] \footnotesize subgraph construction};

\draw[fll] (agg1.south) -- (g00.north);
\draw[fll] (agg1.south) -- (g20.north);
\draw[fll] (agg1.south) -- (g10.north);

\end{tikzpicture}%
}
\caption{Phase~4.}
\label{fig:phase4}
\end{figure}

As illustrated in Fig.~\ref{fig:phase4}, this phase constructs, on every MPI process, the local subgraph required to count the butterflies anchored at its assigned pivot vertices. The workload-aware pivot
assignment produced in Phase~3 specifies, for each vertex, the MPI process responsible for processing it. However, the edges incident to a pivot vertex may be scattered across several processes due to the byte-offset partitioning in Phase~1. Before counting can begin, all edges and adjacency information relevant to a pivot must be gathered on the process to which that pivot is assigned.

Guided by the pivot assignment, each MPI process routes every local edge to the process that owns its pivot endpoints through an \textsc{MPI\_Alltoallv} communication. After this exchange, each process holds the complete set of edges incident to the pivot vertices assigned to it, i.e., the one-hop neighborhood of each of its pivots. Because butterfly enumeration requires closing wedges of the form $u\!\rightarrow\!w\!\rightarrow\!v$, the two-hop neighborhood of each pivot is also required. A second \textsc{MPI\_Alltoallv} communication, therefore, exchanges the adjacency lists of these intermediate vertices so that every process obtains the two-hop adjacency needed to enumerate all wedges centered at its assigned pivots.

Once all required adjacency information has been received, each MPI process locally assembles its subgraph in a compact adjacency (CSR) representation, indexing its assigned pivots together with their one-hop
and two-hop neighbors. This distributed construction ensures that every process holds exactly the portion of the graph needed for its assigned pivots, and no more, so that the subsequent butterfly counting phase can
proceed entirely on local data without any further communication. By localizing all adjacency information ahead of counting, the proposed design confines communication to these two well-defined exchange steps
and eliminates fine-grained remote access during the computation-intensive counting phase.

\subsubsection{Phase 5: Parallel Local Balanced Butterfly Counting}
\label{phase5}
As illustrated in Fig.~\ref{fig:phase5}, each MPI process counts the balanced butterflies anchored at its assigned pivot vertices using the local subgraph constructed in Phase~4. Since all required one-hop and two-hop neighborhood information has already been localized, this phase is performed entirely on local data without any inter-process communication.

\definecolor{compcol}{RGB}{29,158,117}
\begin{figure}[t]
\centering
\resizebox{1.0\columnwidth}{!}{%
\begin{tikzpicture}[
    font=\small, >={Stealth[length=1.2mm]},
    comp/.style  = {rounded corners=2pt, draw=compcol!80!black, fill=compcol!14, line width=0.6pt, align=center, minimum width=15mm, minimum height=3mm, inner sep=1.5pt},
    comm/.style  = {rounded corners=2pt, draw=commcol!85!black, fill=commcol!16, line width=0.7pt, align=center, minimum height=3mm, inner sep=1.5pt},
    thr/.style   = {rounded corners=1.2pt, draw=black!55, fill=black!12, line width=0.4pt, align=center, minimum width=4mm, minimum height=4mm, inner sep=0.5pt, font=\footnotesize},
    tbbblock/.style = {rounded corners=2pt, draw=compcol!80!black, fill=compcol!14,  line width=0.6pt, align=center},
    tiny2/.style = {rounded corners=1pt, draw=black!45, fill=black!8, line width=0.3pt, align=center, font=\footnotesize, inner sep=0.8pt},
    note/.style  = {align=center, font=\footnotesize, text=black!62},
    fl/.style    = {->, line width=0.35pt, draw=black!70},
    fll/.style   = {->, line width=0.35pt, draw=black!70},
    micro/.style = {->, line width=0.3pt, draw=black!60},
    wiggle/.style = {->, line width=0.35pt, draw=black!60, decorate,
                     decoration={snake, amplitude=0.35mm, segment length=1.3mm, post length=1mm}},
]
\def\cA{0}\def\cB{2.3}\def\cC{4.6}
\node[note, font=\scriptsize] at (\cA,3.30) {\textbf{Rank$_0$}};
\node[note, font=\scriptsize] at (\cB,3.30) {\textbf{Rank$_1$}};
\node[note, font=\scriptsize] at (\cC,3.30) {\textbf{Rank$_{p-1}$}};
\node[note, font = \large ] at (3.45,2.75) {$\cdots$};
\node[comp, fill=rankA!22, draw=rankA!80!black] (a01) at (\cA,2.75) {\footnotesize local subgraph};
\node[comp] (a11) at (\cB,2.75) {\footnotesize local subgraph};
\node[comp, fill=rankC!24, draw=rankC!80!black] (a21) at (\cC,2.75) {\footnotesize local subgraph};
\draw[fl] (a0) -- (a01);
\draw[fl] (a1) -- (a11);
\draw[fl] (a2) -- (a21);
 
\node[tbbblock, minimum width=22mm, minimum height=24mm] (b1) at (\cB,1.15) {};
\node[note, anchor=north] at ([yshift=-0.4mm]b1.north) {};

\node[
    thr,
    fill=gray!20,
    draw=gray!70
] (t0) at (1.45,2) {\footnotesize $T_0$};

\node[
    thr,
    fill=gray!20,
    draw=gray!70
] (t1) at (1.95,2) {\footnotesize $T_1$};
\node[font=\large] at (2.5,2) {$\cdots$};
\node[
    thr,
    fill=gray!20,
    draw=gray!70
] (t2) at (2.95,2) {\footnotesize $T_{k-1}$};

\draw[wiggle] (t0.south) -- ++(0,-0.4);

\draw[wiggle]
    (t1.south) -- coordinate (w1) ++(0,-0.6);

\draw[wiggle] (t2.south) -- ++(0,-0.4);

\node[note, font=\large] (vd1) at (-0.55,0.60) {$\vdots$};
\node[note, font=\large] (vd2) at (5,0.60) {$\vdots$};
\node[note] at (\cB+2,1.1) { \footnotesize by following \\[-1pt] \footnotesize Definition~\ref{def-priority}};

\node[tiny2] (wd) at ([yshift=-4mm]w1) {$(u,w,v)$};
\node[tiny2, fill=compcol!12, draw=compcol!70!black] (sym)  at ($(wd)+(-0.35,-0.42)$) {sym};
\node[tiny2, fill=commcol!14, draw=commcol!75!black] (asym) at ($(wd)+(0.35,-0.42)$)  {asym};
\draw[micro] (wd.south) -- (sym.north);
\draw[micro] (wd.south) -- (asym.north);
\node[tiny2] (ch) at ($(wd)+(0.2,-0.9)$) {$\binom{\cdot}{2}$ choose\,2};
\draw[micro] (sym.south)  -- (ch.north);
\draw[micro] (asym.south) -- (ch.north);
\draw[fl] (a11) -- (b1.north);
 
\node[comp, fill=rankA!22, draw=rankA!80!black, minimum width=13mm] (c0) at (\cA,-0.6) {\footnotesize local count};
\node[comp, minimum width=13mm] (c1) at (\cB,-0.6) {\footnotesize local count};
\node[comp, fill=rankC!24, draw=rankC!80!black, minimum width=13mm] (c2) at (\cC,-0.6) {\footnotesize local count};
 
\draw[fl] ($(ch.south)+(0.15,0)$) -- (c1.north);
 
\node[comp, fill=compcol!20, minimum width=26mm] (tot) at (\cB,-1.30) {\footnotesize total balanced butterflies};
\draw[fll] (c0.south) -- (tot.north);
\draw[fll] (c1.south) -- (tot.north);
\draw[fll] (c2.south) -- (tot.north);

\end{tikzpicture}%
}
\caption{Phase~5.}
\label{fig:phase5}
\end{figure}

For each assigned pivot vertex, the process traverses its two-hop neighborhood to enumerate wedges. To ensure that every balanced butterfly is counted exactly once, wedge traversal follows the vertex priority rule defined in Definition~\ref{def-priority}, thereby eliminating duplicate enumeration across MPI processes. Each wedge is classified according to the sum of its two edge signs. For every endpoint vertex, two bucket counters are maintained: $B_1$ stores symmetric wedges (sign-sum $0$ or $2$), whereas $B_2$ stores asymmetric wedges (sign-sum $1$). Since a balanced butterfly is
formed by pairing two wedges of the same type, the accumulated bucket counts are used to compute the number of balanced butterflies.

To exploit shared-memory parallelism, the assigned pivot vertices are distributed among Intel TBB threads. Each thread processes a disjoint subset of pivots and maintains private bucket counters, thereby eliminating the need for synchronization during enumeration. After all threads complete, their local counts are combined to obtain the process-local butterfly count. Finally, a single \textsc{MPI\_Reduce} operation aggregates the local counts from all MPI processes to produce the total
number of balanced butterflies in the input graph.

\begin{algorithm}[t]
\small
\DontPrintSemicolon
\SetKwInOut{Input}{Input}
\SetKwInOut{Output}{Output}
\SetKwFor{ForPar}{parallel for}{do}{end}

\Input{Signed bipartite graph $G=(U,V,E)$, $P$ MPI ranks, and $T$ threads per rank}

\Output{$\beta$: Total number of balanced butterflies $\beta$}

\textbf{Phase 1: Parallel graph loading}\;
Read the assigned graph partition in parallel\;

\BlankLine

Construct the global vertex index and compute the global vertex degrees
$d(\cdot)$ \tcp*{Phase 2}


\ForEach{vertex $x\in U\cup V$}{
    $w_{load}(x)\gets\displaystyle\sum_{y\in\Gamma(x)}d(y)$  \tcp*[r]{Phase 3}
}
Sort vertices by decreasing $w(\cdot)$\;
\ForEach{vertex $x$ in sorted order}{
    Assign $x$ to the least-loaded rank\;
}

Exchange aggregated higher-priority neighborhoods and construct the local subgraph \tcp*{Phase 4} 


$\beta_r\gets0$ \tcp*{Phase 5}

\ForPar{pivot $u$ assigned to the current rank}{
    Initialize thread-local buckets $B_1$ and $B_2$\;

    \ForEach{$v\in\Gamma(u)$ such that $p(v)<p(u)$}{

        \ForEach{$w\in\Gamma(v)$ such that $p(w)<p(u)$}{

            \eIf{$s(u,v)=s(v,w)$}{
                $B_1[w]\gets B_1[w]+1$\;
            }{
                $B_2[w]\gets B_2[w]+1$\;
            }

        }
    }

    \ForEach{touched endpoint $w$}{

        $\beta_r\gets\beta_r+
        \dbinom{B_1[w]}{2}+
        \dbinom{B_2[w]}{2}$\;

    }

    Reset the bucket entries corresponding to all touched endpoints\;

}

\BlankLine


$\beta\gets\textsc{Reduce}(\beta_r,+)$ \tcp*{Phase 6}

\Return $\beta$\;

\caption{D-BBC: Distributed balanced butterfly counting algorithm.}
\label{alg:dbbc}
\end{algorithm}

  \definecolor{rankA}{RGB}{55,138,221}
  \definecolor{rankB}{RGB}{29,158,117}
  \definecolor{rankC}{RGB}{186,117,23}
  \definecolor{rankD}{RGB}{178,80,82}
  \definecolor{idxcol}{RGB}{120,160,60}
  \definecolor{balcol}{RGB}{110,90,190}
  \definecolor{redcol}{RGB}{178,70,52}
\begin{figure*}[t]
\centering
\resizebox{1.45\columnwidth}{!}{%
\begin{tikzpicture}[
    font=\small, >={Stealth[length=1.6mm]},
    rk/.style={rounded corners=3pt, line width=0.7pt, align=center, inner sep=2.5pt},
    band/.style={rounded corners=3pt, line width=0.7pt, align=center, inner sep=3pt},
    plabel/.style={align=left, font=\footnotesize, text=black!60},
    psub/.style={align=left, font=\scriptsize, text=black!45},
    fl/.style={->, line width=0.6pt, draw=black!45},
]



\def\cA{0}\def\cB{3.4}\def\cC{6.8}\def\cD{10.2}
\def\midx{5.1}   

\tikzset{
  blackdot/.style={
    circle, draw=black, fill=blue!40,
    minimum size=6pt, inner sep=0pt
  },
  whitedot/.style={
    circle, draw=black, fill=red!40,
    minimum size=6pt, inner sep=0pt
  },
  edge/.style={
    line width=0.8pt,
    shorten >=1pt,
    shorten <=1pt
  }
}

\node[blackdot,label=left:$7$] (b7) at (4.55,13.15) {};
\node[blackdot,label=left:$2$] (b2) at (4.55,12) {};

\node[whitedot,label=right:$7$] (b5) at (5.65,13.15) {};
\node[whitedot,label=right:$8$] (b8) at (5.65,12) {};

\draw[red, edge] (b7)--(b5);
\draw[red, edge] (b7)--(b8);
\draw[red, edge] (b2)--(b5);
\draw[red, edge] (b2)--(b8);

\node[font=\itshape] at (8.1,12.45) {Sample Input Graph};
\node[rounded corners=6pt, draw=black!45, fill=black!5, line width=0.7pt,
      inner sep=5pt, font=\itshape, text=black] at (\midx,11.1)
      {Graph: edges $(2,5,0)$,\ \ $(2,8,0)$,\ \ $(7,5,0)$,\ \ $(7,8,0)$};
 
\node[font=\large] at (\cA,10.2) {Rank0};
\node[font=\large] at (\cB,10.2) {Rank1};
\node[font=\large] at (\cC,10.2) {Rank2};
\node[font=\large] at (\cD,10.2) {Rank3};
 
\node[plabel] at (-3.2,9.5) {Phase 1};
\node[rk, draw=rankA!40!black, fill=rankA!44, text=black, minimum width=24mm,font=\large] (e0) at (\cA,9.4) {edge (2,5)};
\node[rk, draw=rankB!70!black, fill=rankB!44, text=black, minimum width=24mm,font=\large] (e1) at (\cB,9.4) {edge (2,8)};
\node[rk, draw=rankC!80!black, fill=rankC!46, text=black, minimum width=24mm,font=\large] (e2) at (\cC,9.4) {edge (7,5)};
\node[rk, draw=rankD!80!black, fill=rankD!44, text=black, minimum width=24mm,font=\large] (e3) at (\cD,9.4) {edge (7,8)};
 
\node[plabel] at (-3.2,8.3) {Phase 2};
\node[band, draw=idxcol!75!black, fill=idxcol!44, text=black, minimum width=132mm,font=\large] (idx) at (\midx,8.2)
  {global degrees:\ \ $\deg(2)=\deg(7)=\deg(5)=\deg(8)=2$};
\foreach \e in {e0,e1,e2,e3} \draw[fl] (\e.south) -- (\e.south |- idx.north);
 
\node[plabel] at (-3.2,6.95) {Phase 3};
\node[band, draw=balcol!75!black, fill=balcol!42, text=black, minimum width=132mm, minimum height=13mm] (bal) at (\midx,6.8)
  {{\large greedy bin-pack $\rightarrow$ pivot assignment}\\[1pt]
   $2 \rightarrow$ Rank3\ \ $\cdot$\ \ $5 \rightarrow$ Rank0\ \ $\cdot$\ \ $7 \rightarrow$ Rank2\ \ $\cdot$\ \ $8 \rightarrow$ Rank1};
\draw[fl] (idx.south) -- (bal.north);
 
\node[plabel] at (-3.2,5.0) {Phase 4};
\node[rk, draw=rankA!80!black, fill=rankA!14, text=black, minimum width=27mm, minimum height=14mm] (s0) at (\cA,5.0)
  {\textbf{Rank0: piv 5}\\[1pt]$5\!\to\!\{2,7\}$\\[-1pt]{\scriptsize $2\!\to\!\{5,8\}$\ $7\!\to\!\{5,8\}$}};
\node[rk, draw=rankB!70!black, fill=rankB!14, text=black, minimum width=27mm, minimum height=14mm] (s1) at (\cB,5.0)
  {\textbf{Rank1: piv 8}\\[1pt]$8\!\to\!\{2,7\}$\\[-1pt]{\scriptsize $2\!\to\!\{5,8\}$\ $7\!\to\!\{5,8\}$}};
\node[rk, draw=rankC!80!black, fill=rankC!16, text=black, minimum width=27mm, minimum height=14mm] (s2) at (\cC,5.0)
  {\textbf{Rank2: piv 7}\\[1pt]$7\!\to\!\{5,8\}$\\[-1pt]{\scriptsize $5\!\to\!\{2,7\}$\ $8\!\to\!\{2,7\}$}};
\node[rk, draw=rankD!80!black, fill=rankD!14, text=black, minimum width=27mm, minimum height=14mm] (s3) at (\cD,5.0)
  {\textbf{Rank3: piv 2}\\[1pt]$2\!\to\!\{5,8\}$\\[-1pt]{\scriptsize $5\!\to\!\{2,7\}$\ $8\!\to\!\{2,7\}$}};
\draw[fl] (bal.south -| s0) -- (s0.north);
\draw[fl] (bal.south -| s1) -- (s1.north);
\draw[fl] (bal.south -| s2) -- (s2.north);
\draw[fl] (bal.south -| s3) -- (s3.north);
 
\node[plabel] at (-3.2,3.6) {Phase 5};
\node[rk, draw=rankA!80!black, fill=rankA!14, text=black, minimum width=27mm, minimum height=22mm] (p0) at (\cA,2.6)
  {{\large $u=5$}\\[2pt]mid 2 $(2\!<\!5$\ding{51}$)$\\ v 7 $(7\!<\!5$\ding{55}$)$\\ mid 7 $(7\!<\!5$\ding{55}$)$\\[2pt]no pair\\[3pt]{\large count 0}};
\node[rk, draw=rankB!70!black, fill=rankB!14, text=black, minimum width=27mm, minimum height=22mm] (p1) at (\cB,2.6)
  {{\large $u=8$}\\[2pt]mid 2 $(2\!<\!8$\ding{51}$)$\\ $\to 8\!\cdot\!2\!\cdot\!5$\\ mid 7 $(7\!<\!8$\ding{51}$)$\\ $\to 8\!\cdot\!7\!\cdot\!5$\\ $c_0[5]=2$\\[2pt]{\large count 1 $\star$}};
\node[rk, draw=rankC!80!black, fill=rankC!16, text=black, minimum width=27mm, minimum height=22mm] (p2) at (\cC,2.6)
  {{\large $u=7$}\\[2pt]mid 5 $(5\!<\!7$\ding{51}$)$\\ v 2 $(2\!<\!7$\ding{51}$)$\\ mid 8 $(8\!<\!7$\ding{55}$)$\\[2pt]$c_0[2]=1$\\[3pt]{\large count 0}};
\node[rk, draw=rankD!80!black, fill=rankD!14, text=black, minimum width=27mm, minimum height=22mm] (p3) at (\cD,2.6)
  {{\large $u=2$}\\[2pt]mid 5 $(5\!<\!2$\ding{55}$)$\\ mid 8 $(8\!<\!2$\ding{55}$)$\\[2pt]all pruned\\[6pt]{\large count 0}};
\foreach \a/\b in {s0/p0,s1/p1,s2/p2,s3/p3} \draw[fl] (\a.south) -- (\b.north);
 
\node[band, draw=redcol!80!black, fill=redcol!42, text=black, minimum width=60mm, minimum height=11mm,font=\large] (red) at (\midx,0.15)
  {{\large Reduce($+$):\ \ $0 + 1 + 0 + 0$}\\[1pt]{\large one global reduction}};
\foreach \p in {p0,p1,p2,p3} \draw[fl] (\p.south) -- (red.north);
 
\node[band, draw=balcol!75!black, fill=balcol!12, text=black, minimum width=48mm, minimum height=9mm,font=\large] (tot) at (\midx,-1.4)
  {{\normalsize 1 balanced butterfly}};
\draw[fl] (red.south) -- (tot.north);
 
\node[font=\itshape] at (\midx,-2.25) {the butterfly 2--5--7--8 is counted exactly once, at its highest-id pivot $u=8$};
\end{tikzpicture}%
}
\caption{End-to-end trace of the five-phase pipeline on a $K(2,2)$ example}
\label{fig:pipeline_example}
\end{figure*}

Algorithm~\ref{alg:dbbc} summarizes the complete workflow of the proposed \texttt{D-BBC} framework, consolidating the five phases described above into a
single procedure, with phase boundaries marked by side comments
corresponding to the subsections above.

\begin{theorem}
Algorithm~\ref{alg:dbbc} correctly computes the total number of balanced
butterflies in the signed bipartite graph $G$.
\label{thm:correctness}
\end{theorem}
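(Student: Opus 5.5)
The proof reduces the global count to a sum of per-pivot counts, and then shows that each per-pivot count is exactly the number of balanced butterflies whose \emph{highest-priority vertex} is that pivot. The priority of Definition~\ref{def-priority} is a strict total order on $U\cup V$, because ties in degree are broken by identifiers. Hence every butterfly $\{u,u',v,v'\}$ has a unique vertex of maximum priority. By symmetry of the roles of $U$ and $V$ we may call it $u$, since Algorithm~\ref{alg:dbbc} treats every $x\in U\cup V$ as a potential pivot. I will write $\beta(u)$ for the number of balanced butterflies whose maximum-priority vertex is $u$, so the target is $\beta=\sum_{u\in U\cup V}\beta(u)$.

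\textbf{Step 1 (local correctness for a fixed pivot).} Fix a pivot $u$. The two nested loops enumerate exactly the wedges $(u,v,w)$ with $v\in\Gamma(u)$, $w\in\Gamma(v)$, $p(v)<p(u)$ and $p(w)<p(u)$. The last condition forces $w\neq u$, and $w$ lies on the same side as $u$.

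I will show a bijection between unordered pairs of such wedges sharing an endpoint $w$ (through distinct middles $v\neq v'$) and butterflies $\{u,w,v,v'\}$ whose maximum-priority vertex is $u$.
\begin{itemize}
\item Forward direction: any such pair closes a $4$-cycle, and all of $v,v',w$ have lower priority than $u$, so $u$ is the maximum.
\item Converse: if $u$ is the maximum of a butterfly, then both of its wedges to the opposite vertex $w$ pass the priority filters, so both are enumerated.
\end{itemize}
The bucket choice tests $s(u,v)=s(v,w)$, which is exactly the symmetric/asymmetric classification of Definitions~\ref{def:wedgesym} and~\ref{def:wedgeasym}. By Lemma~\ref{lemma:lem-1}, such a butterfly is balanced if and only if its two wedges fall into the same bucket. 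Therefore
\[
\sum_{w}\left(\binom{B_1[w]}{2}+\binom{B_2[w]}{2}\right)=\beta(u).
\]
Resetting the buckets after each pivot guarantees that counts do not leak between pivots. The same holds under TBB parallelism, because buckets are thread-local and pivots are split disjointly among threads.

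\textbf{Step 2 (distributed correctness).} I then argue three facts about the distributed execution.
\begin{itemize}
\item Phases~1--2 produce every edge exactly once, with globally consistent identifiers and global degrees. This makes $p(\cdot)$ identical on all ranks, which matters because the uniqueness argument in Step~1 needs one common total order.
\item Phase~3 assigns each vertex to exactly one rank, since the greedy bin-packing defines a function from vertices to ranks.
\item Phase~4 delivers to the assigned rank the full $\Gamma(u)$, including signs, and the full $\Gamma(v)$ for every $v\in\Gamma(u)$. That rank can therefore run Step~1 for $u$ exactly as on the whole graph.
\end{itemize}
Summing over the ranks' disjoint pivot sets and applying \textsc{Reduce}$(+)$ then yields $\sum_u\beta(u)=\beta$.

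\textbf{Main obstacle.} I expect the crux to be the completeness claim for Phase~4. Its routing is described only informally, so I would state and prove it as a lemma: each edge $(x,y)$ is sent to the ranks assigned to $x$ and to $y$, so one-hop adjacencies are complete. The second exchange then sends each $\Gamma(v)$, with edge signs preserved, to every rank holding a pivot adjacent to $v$. For the $w$-loop it suffices to deliver only the entries of $\Gamma(v)$ with $p(w)<p(u)$, so the ``higher-priority neighborhoods'' optimisation in Algorithm~\ref{alg:dbbc} loses nothing.
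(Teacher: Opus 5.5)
Your proposal is correct and follows the same route as the paper's proof. Phases~1--4 supply each edge once, consistent global degrees, a unique rank per pivot, and complete one- and two-hop neighbourhoods; the strict priority order then gives each butterfly a unique highest-priority pivot, and Lemma~\ref{lemma:lem-1} shows that $\binom{B_1[w]}{2}+\binom{B_2[w]}{2}$ counts exactly the balanced ones. Your version is more explicit where the paper only asserts: the wedge-pair/butterfly bijection at a fixed pivot, the need for globally unique identifiers so that $p(\cdot)$ is one common total order, and the Phase~4 routing lemma.
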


\begin{proof}
We prove that Algorithm~\ref{alg:dbbc} counts every balanced butterfly exactly once.

In Phase~1, the input edge list is partitioned into disjoint byte ranges whose boundaries are aligned with complete edge records; hence every edge of $E$ is read by exactly one MPI. In Phase~2, the partial degrees produced by different partitions are aggregated at the unique owner of each vertex, yielding the exact global degree. In Phase~3, the workload of each pivot is computed from its complete neighborhood, and the greedy assignment maps every pivot to exactly one MPI rank. The resulting assignment associates every pivot with exactly one MPI rank. Phase~4, then exchanges the edges and adjacency information required by each assigned pivot, so each rank obtains the complete one-hop and two-hop neighborhood needed for its assigned pivots.

It remains to show that Phase~5 counts exactly the balanced butterflies. For an assigned pivot $u$, the priority condition $p(v)<p(u)$ and $p(w)<p(u)$ ensure that each butterfly has a unique highest-priority pivot and is therefore processed by exactly one rank. For every common endpoint $v$, the algorithm classifies each wedge according to whether its two edge signs are equal or different, storing them in $B_1[w]$ and $B_2[w]$, respectively. By Lemma~\ref{lemma:lem-1}, a balanced butterfly is formed if and only if its two wedges belong to the same class: either two symmetric wedges or two asymmetric wedges. Thus, $\binom{B_1[w]}{2}+\binom{B_2[w]}{2}$ counts exactly the balanced butterflies associated with endpoint $w$.
\end{proof}

\subsection{Complexity Analysis}
Let $n=|U|+|V|$, $m=|E|$, and let $P$ and $T$ denote the numbers of MPI ranks
and threads per rank, respectively.

\textbf{Computation.}
Phase~1 reads $O(N/P)$ bytes and parses $O(m/(PT))$ edges per thread.
Phase~2 deduplicates local vertices and aggregates degrees in expected
$O(m/P)$ time per rank via hashing. In Phase~3, partial workloads are
obtained in $O(m/P)$ time, and the centralized scheduler sorts the $n$
workloads in $O(n\log n)$ and performs the heap-based greedy assignment in
$O(n\log P)$. Phase~5 dominates: under the degree-based priority of
Definition~3, each edge $(u,v)$ is expanded only from its higher-priority
endpoint at a cost equal to the degree of its lower-priority endpoint, so the
total enumeration work is
$O\bigl(\sum_{(u,v)\in E}\min(d(u),d(v))\bigr)$.

\textbf{Communication.}
All exchanges are sparse, since payload flows only between ranks that share
graph data. In Phases~2 and~3, each rank routes $O(m/P)$ vertex and workload
records to their owner ranks, for an aggregate volume of $O(m)$, plus $O(n)$
words for the scheduler's gather and scatter operations. In Phase~4, each
edge travels only to the ranks owning its (at most two) pivot endpoints, and
adjacency lists are sent only to the ranks whose assigned pivots require
them, yielding $O(m)$ words plus at most the enumeration volume
$O\bigl(\sum_{(u,v)\in E}\min(d(u),d(v))\bigr)$. Phase~5 requires a single
reduction of $O(\log P)$ cost.

 \subsection{Ilustrative Example}
To illustrate the complete pipeline, we trace the execution of D-BBC on a small complete bipartite graph $(2,2)$ using $P=4$ MPI processes, as shown in Fig.~\ref{fig:pipeline_example}. The graph consists of four signed edges, $(2,5,0)$, $(2,8,0)$, $(7,5,0)$, and
$(7,8,0)$, where $\{2,7\}$ are left vertices and $\{5,8\}$ are right vertices, forming exactly one balanced butterfly.

\textit{\textbf{Phase 1.}} The four edges are distributed one per process through byte-offset partitioning: Rank$_0$ reads $(2,5,0)$, Rank$_1$ reads $(2,8,0)$, Rank$_2$ reads $(7,5,0)$, and Rank$_3$ reads $(7,8,0)$.

\textit{\textbf{Phase 2}.} Each vertex participates in two edges, so the global degree computation yields
$d(2)=d(7)=d(5)=d(8)=2$.

\textit{\textbf{Phase 3.}} Using the workload estimate of Eq.~(\ref{eq:workload}), all pivots have equal workload, and the greedy scheduler produces a balanced assignment: vertex $5\!\rightarrow\!$ Rank$_0$, $8 \rightarrow\!$ Rank$_1$, $7\!\rightarrow\!$ Rank$_2$, and
$2\!\rightarrow\!$ Rank$_3$.

\textit{\textbf{Phase 4}.} Each process gathers the one-hop and two-hop adjacency of its assigned pivot. For instance, Rank$_1$, which is assigned pivot $8$, receives $8\!\rightarrow\!\{2,7\}$ together with the adjacency $2\!\rightarrow\!\{5,8\}$ and $7\!\rightarrow\!\{5,8\}$ needed to close the wedges centered at $8$.

\textit{\textbf{Phase 5}.} Each process enumerates the wedges anchored at its pivot under the priority rule of Definition~\ref{def-priority}, which (with equal degrees) retains only intermediate and endpoint vertices whose identifier is smaller than the pivot. Consequently, the single butterfly is counted at exactly one canonical pivot, its highest-identifier vertex $u=8$:
\begin{itemize}
\item At Rank$_1$ ($u=8$), both wedges $8\!\rightarrow\!2\!\rightarrow\!5$ and $8\!\rightarrow\!7\!\rightarrow\!5$ are valid, giving $c_0[5]=2$ and, by $\binom{2}{2}=1$ butterfly.
\item At the remaining pivots ($u=5,7,2$), the priority rule prunes one or both endpoints, so no complete wedge pair is formed and each contributes a count of $0$.
\end{itemize}
A single \textsc{MPI\_Reduce} then sums the local counts,
$0+1+0+0=1$, yielding the correct total of one balanced butterfly.

This example confirms that the priority-based pruning enumerates each butterfly exactly once, even though its vertices are distributed across multiple processes.

\section{Experimental Evaluation}
\label{Sec:EE}
In this section, we assess the performance of the proposed algorithm.

\textbf{Computing Resources:} We implemented all algorithms in C++. Experiments were conducted on an HPC cluster using the \texttt{big\_compute\_amd\_9655} partition. Each compute node is equipped with two AMD EPYC 9655 processors, providing 192 physical CPU cores, 385~GB of main memory, and a 200~Gbps Mellanox HDR InfiniBand interconnect. The distributed implementation employs MPI for inter-node communication and Intel TBB for shared-memory parallelism. All algorithms, including the baseline methods, were evaluated under the same experimental settings.

\textbf{Algorithms:} We evaluate the following algorithms. 
\begin{itemize}
     \item \texttt{BB2K:} A serial algorithm from our prior work~\cite{kiran2024efficient}. 

    \item \texttt{M-BBC:} The proposed shared-memory multi-core algorithm for exact counting of balanced butterflies in large-scale signed graphs, exploiting fine-grained parallelism to accelerate butterfly counting~\cite{kiran2026multi}.
    
\item \texttt{S-Monarch:} An adaptation and extension of the distributed butterfly-counting algorithm~\cite{tang2024monarch} for balanced butterfly counting in signed bipartite graphs.

\item \texttt{D-BBC:} The proposed hybrid distributed-memory algorithm for counting balanced butterflies, employing MPI for inter-node communication and Intel oneTBB for intra-node parallelism.
\end{itemize}

\begin{table*}[!ht]
\centering
\footnotesize
\caption{Characteristics of the datasets with BBF (balanced butterflies count).}
\label{table:datasets}
\setlength{\tabcolsep}{4pt}
\renewcommand{\arraystretch}{1.2}

\begin{tabular}{|l|r|r|r|r|r|}
\hline
\textbf{Dataset} & $|U|$ & $|V|$ & $|E|$ & \textbf{Density} & \textbf{BBF} \\
\hline

Senate (\textbf{SE})       & 145 & 1,201 & 27,083 & 0.155 & 15.32 \textbf{M} \\
\hline
DBLP (\textbf{DBLP})       & 6,001 & 1,308 & 29,256 & 0.004 & 0.85 \textbf{M} \\
\hline
House (\textbf{HO})        & 515 & 1,281 & 114,378 & 0.174 & 280.79 \textbf{M} \\
\hline
Wiki-Nap (\textbf{NAP})    & 1,753 & 25,881 & 265,546 & 0.006 & 2.24 \textbf{B} \\
\hline
BookCrossing (\textbf{BC}) & 77,802 & 185,955 & 433,652 & 0.00003 & 1.11 \textbf{M} \\
\hline

NIPS-Papers (\textbf{NIPS}) & 1,500 & 12,375 & 746, 315 & 0.00003 & 3.75 \textbf{B} \\
\hline

Last.fm (\textbf{LF})      & 992 & 174,077 & 898,062 & 0.005 & 2.35 \textbf{B} \\
\hline
Movielens (\textbf{MV})    & 6,040 & 3,706 & 1,000,208 & 0.045 & 8.90 \textbf{B} \\
\hline
Jester 150 (\textbf{JE})   & 50,692 & 140 & 1,728,847 & 0.244 & 136.88 \textbf{B} \\
\hline
KDD Cup (\textbf{KDD})     & 255,170 & 1,848,114 & 2,766,393 & 0.0000059 & 9.20 \textbf{M} \\
\hline
Digg Votes (\textbf{DG})   & 139,409 & 3,553 & 3,010,197 & 0.006 & 15.06 \textbf{B} \\
\hline
AOL (\textbf{AOL})         & 4,811,647 & 1,632,788 & 10,741,953 & 0.0000014 & 104.47 \textbf{M} \\
\hline
Epinions (\textbf{EP})     & 120,492 & 755,760 & 13,668,320 & 0.00015 & 158.33 \textbf{B} \\
\hline
Netflix (\textbf{NX}) & 480,189 & 17,770 & 100,480,507 & 0.012 & 8.39 \textbf{T} \\
\hline
Yahoo (\textbf{YH})     & 1,000,990 & 624,961 & 256,804,235 & 0.0004 & 5.20 \textbf{T} \\
\hline

\end{tabular}
\end{table*}

\textbf{Datasets Description:}
We evaluate the proposed algorithms on a diverse collection of 15 real-world bipartite datasets, covering a wide range of graph sizes and densities. The SE and HO datasets are obtained from the signed bipartite network repository\footnote{\url{https://github.com/tylersnetwork/signed_bipartite_networks}}. The DBLP, MV, KDD, AOL, and NX datasets are collected from the bipartite network repository\footnote{\url{https://renchi.ac.cn/datasets/}}. The remaining datasets, namely NAP, BC, NIPS, LF, JE, DG, EP, and YH, are obtained from the KONECT collection\footnote{\url{http://konect.cc/networks/}}.

Among these datasets, SE, HO, BC, LF, and EP are naturally signed bipartite graphs and have been widely used in previous studies on balanced butterfly counting and maximal balanced biclique enumeration~\cite{derr2019balance,chung2023maximum,sun2022maximal}. The JE and MV datasets are rating networks that can be regarded as signed networks, as in prior work~\cite{sun2022maximal,kudo2020gcnext,chengunsupervised}. Following the convention adopted in these studies, ratings are converted into edge signs. The remaining datasets (DBLP, NAP, NIPS, KDD, AOL, DG, and YH) are originally unsigned bipartite graphs. Following the experimental setting of~\cite{li2018efficient}, we generate signed bipartite graphs by randomly assigning 30\% of the edges as negative and the remaining 70\% as positive. Several datasets, such as DG, NAP, and LF, contain duplicate edges with conflicting signs, which are resolved by retaining the latest interaction. A summary of all datasets is provided in Table~\ref{table:datasets} along with BBF (balanced butterflies count).

\subsection{Performance of the Proposed Algorithms}

The experimental evaluation consists of the following studies:

\begin{itemize}
    \item \textbf{Performance Comparison:} Comparison of the serial, shared-memory parallel, and distributed implementations on a single node.
    
    \item \textbf{Runtime Breakdown:} Analysis of the execution time by decomposing the total runtime into the I/O, Index, Exchange, Count, and Reduce phases.

    \item \textbf{Hybrid Configuration Analysis:} Evaluation of different MPI rank and TBB thread configurations to identify the best-performing hybrid execution setup.
    
    \item \textbf{Load Balancing:} Evaluation of the workload distribution among MPI processes.
    
    \item \textbf{Communication Analysis:} Analysis of communication volume with varying MPI ranks and across multiple compute nodes.
    
    \item \textbf{Multi-node Scalability Analysis:} Evaluation of count-phase performance and communication overhead on one, two, and four compute nodes.

\end{itemize}

We compare the proposed algorithm with the baseline methods in terms of execution time on the aforementioned datasets. The serial baseline (\texttt{BB2K}) is executed using a single CPU core, the shared-memory baseline (\texttt{M-BBC}) is evaluated on a single compute node with 192 threads, and the proposed distributed algorithm (\texttt{D-BBC}) is evaluated on 1, 2, and 4 compute nodes using multiple MPI processes and TBB thread configurations.

\begin{figure*}[t]
\centering
\begin{tikzpicture}
\begin{axis}[
    width=12cm,
    height=5cm,
    ybar=0.8pt,
    bar width=5pt,
    ymode=log,
    log basis y=10,
    log origin=infty,     
    ymin=0.0005,
    ymax=60000,
    symbolic x coords={SE,DBLP,HO,NAP,BC,NIPS,LF,MV,JE,KDD,DG,AOL,EP,NX,YH},
    xtick=data,
    xticklabel style={rotate=45,anchor=east,font=\footnotesize},
    xlabel={\footnotesize Datasets},
    ylabel={Execution Time (seconds)},
    ytick={0.001,0.01,0.1,1,10,100,1000},
    yticklabels={$10^{-3}$,$10^{-2}$,$10^{-1}$,$10^{0}$,$10^{1}$,$10^{2}$,\footnotesize $> 5\_hrs$},
    enlarge x limits=0.03,
    tick align=outside,
    axis on top,
    legend style={
        at={(0.5,1.10)},
        anchor=south,
        legend columns=3,
        draw=none,
        column sep=1.2ex,
        font=\small
    },
    every axis plot/.append style={fill opacity=0.9,draw opacity=1},
]

\addplot[
fill=blue!70,
draw=blue!90
]  coordinates {
(SE,0.03) (DBLP,0.05) (HO,0.92) (NAP,0.84) (BC,2.18)
(NIPS,30) (LF,12) (MV,42) (JE,12) (KDD,26)
(DG,153) (AOL,112) (EP,235) (NX,18000) (YH,18000)
};

\addplot[
fill=orange!80,
draw=orange!90!black
]  coordinates {
(SE,0.0069) (DBLP,0.0147) (HO,0.0147) (NAP,0.015) (BC,0.1261)
(NIPS,0.29) (LF,0.21) (MV,0.16) (JE,0.0919) (KDD,0.7289)
(DG,2.03) (AOL,2.91) (EP,4.0224) (NX,30.9) (YH,154.9508)
};

\addplot[
fill=green!70!black,
draw=green!50!black
] coordinates {
(SE,0.000950575) (DBLP,0.000758648) (HO,0.00191569) (NAP,0.00550389) (BC,0.00395155)
(NIPS,0.0136683) (LF,0.0146573) (MV,0.0166743) (JE,0.0249028) (KDD,0.0224538)
(DG,0.0410864) (AOL,0.289683) (EP,0.212017) (NX,3.23296) (YH,31.6319)
};

\legend{BB2K, M-BBC, D-BBC }
\end{axis}
\end{tikzpicture}
\caption{Counting-time comparison of BB2K, M-BBC, and D-BBC across all
datasets.}
\label{fig:runtimecomparison}
\end{figure*}

\begin{figure}[t]
\centering
\begin{tikzpicture}
\begin{axis}[
    width=8.5cm, height=5cm,
    ymode=log, log basis y=10,
    ymin=0.0004, ymax=1000,
    xtick=data,
    symbolic x coords={SE,DBLP,HO,NAP,BC,NIPS,LF,MV,JE,KDD,DG,AOL,EP,NX,YH},
    xticklabel style={rotate=45, anchor=east, font=\footnotesize},
    xlabel={\footnotesize Datasets},
    ylabel={\footnotesize Counting Time (s)},
    ytick={0.001,0.01,0.1,1,10,100,1000},
    yticklabels={$10^{-3}$,$10^{-2}$,$10^{-1}$,$10^{0}$,$10^{1}$,$10^{2}$,$10^{3}$},
    grid=major, grid style={dashed, gray!25},
    tick align=outside,
    legend style={
        at={(0.5,1.14)}, anchor=south,
        legend columns=2, draw=none, column sep=1.5ex, font=\footnotesize
    },
    enlarge x limits=0.05,
    every axis plot/.append style={line width=1.1pt},
]
\addplot[
    smooth, tension=0.6, color=green!55!black,
    mark=triangle*, mark size=2pt, mark options={fill=green!55!black},
] coordinates {
(SE,0.0007)(DBLP,0.0006)(HO,0.0017)(NAP,0.0018)(BC,0.0034)(NIPS,0.0118)(LF,0.0134)(MV,0.0153)(JE,0.0108)(KDD,0.0207)(DG,0.0392)(AOL,0.0817)(EP,0.2804)(NX,2.6827)(YH,29.1964)
};
\addlegendentry{D-BBC}

\addplot[
    smooth, tension=0.6, color=orange!85!black,
    mark=*, mark size=1.6pt, mark options={fill=orange!85!black},
] coordinates {
(SE,0.0043)(DBLP,0.0012)(HO,0.014)(NAP,0.0534)(BC,0.0094)(NIPS,0.2584)(LF,0.1018)(MV,0.4456)(JE,1.0368)(KDD,0.0638)(DG,1.007)(AOL,3.0)(EP,1.4019)
};
\addlegendentry{S-Monarch}
\end{axis}
\end{tikzpicture}
\caption{Counting-time comparison between \texttt{D-BBC} and the
\texttt{S-Monarch} baseline across all datasets (log scale). 
}
\label{fig:monarch_curve}
\end{figure}

\subsection{Performance Comparison}
Figure~\ref{fig:runtimecomparison} compares the counting time of the proposed \texttt{D-BBC} against the serial \texttt{BB2K} and the shared-memory \texttt{M-BBC} across all $15$ datasets. Across the $13$ datasets on which the serial baseline completed, \texttt{D-BBC} reduces the counting time by $31.6\times$ to $3723.9\times$ over
\texttt{BB2K}, with an average speedup of $540.9\times$. Relative to the shared-memory \texttt{M-BBC}, \texttt{D-BBC} achieves an additional $2.73\times$ (NAP) to $49.4\times$ (DG) improvement across all $15$ datasets, with an average speedup of $11.9\times$. For the two largest datasets, NX and YH, the serial \texttt{BB2K} did not finish within the $5$-hour execution limit. Both parallel methods, however, processed these graphs successfully: \texttt{D-BBC} reduces the counting time from $30.9$\,s to $3.23$\,s on NX ($9.6\times$) and from $154.95$\,s to $31.63$\,s on YH ($4.9\times$) relative to \texttt{M-BBC}, demonstrating its ability to scale to large-scale signed bipartite graphs that are intractable for a serial implementation.

\begin{table*}[t]
\centering
\caption{Configurations used in the experiments.}
\label{tab:configurations}
\renewcommand{\arraystretch}{1.1}
\setlength{\tabcolsep}{4pt}
\begin{tabular}{c|cccccccccccc}
\hline
\textbf{Configuration} &
$C_{1}$ & $C_{2}$ & $C_{3}$ & $C_{4}$ & $C_{5}$ & $C_{6}$ &
$C_{7}$ & $C_{8}$ & $C_{9}$ & $C_{10}$ & $C_{11}$ & $C_{12}$ \\
\hline
\textbf{MPI Ranks} &
1 & 2 & 3 & 4 & 6 & 8 & 12 & 16 & 24 & 32 & 48 & 64 \\

\textbf{TBB Threads} &
192 & 96 & 64 & 48 & 32 & 24 & 16 & 12 & 8 & 6 & 4 & 3 \\
\hline
\end{tabular}
\end{table*}

\begin{figure*}[!ht]

\centering
\begin{tikzpicture}
\begin{groupplot}[
    group style={group size=3 by 2, horizontal sep=0.9cm, vertical sep=1.5cm},
    width=4.7cm, height=3.65cm,
    ybar stacked,
    xmin=0.5, xmax=12.5,
    xtick={1,2,3,4,5,6,7,8,9,10,11,12},
    xticklabels={$C_{1}$,$C_{2}$,$C_{3}$,$C_{4}$,$C_{5}$,$C_{6}$,$C_{7}$,$C_{8}$,$C_{9}$,$C_{10}$,$C_{11}$,$C_{12}$},
    xticklabel style={rotate=90,anchor=east,font=\scriptsize},
    xlabel={Configuration},
xlabel style={font=\scriptsize, yshift=6pt},
    grid=major, grid style={gray!25},
    tick label style={font=\scriptsize},
    title style={font=\scriptsize},
]

\nextgroupplot[
    ymin=0, ymax=0.2,
    title={DBLP},    title style={yshift=-100pt}, ylabel={Time (s)},
    legend style={draw=none,fill=none,at={(1.7,1.30)},anchor=south,legend columns=5,column sep=0.5ex,font=\scriptsize}, 
]
\addplot+[ybar,bar width=4pt,fill=blue!65,draw=blue!80!black] coordinates{(1,0.003)(2,0.004)(3,0.004)(4,0.005)(5,0.007)(6,0.009)(7,0.013)(8,0.051)(9,0.021)(10,0.027)(11,0.041)(12,0.06)};
\addplot+[ybar,bar width=4pt,fill=red!70,draw=red!85!black] coordinates{(1,0.002)(2,0.001)(3,0.002)(4,0.002)(5,0.003)(6,0.004)(7,0.005)(8,0.008)(9,0.017)(10,0.024)(11,0.039)(12,0.057)};
\addplot+[ybar,bar width=4pt,fill=green!60!black,draw=green!45!black] coordinates{(1,0.005)(2,0.004)(3,0.004)(4,0.003)(5,0.004)(6,0.003)(7,0.016)(8,0.022)(9,0.017)(10,0.014)(11,0.021)(12,0.025)};
\addplot+[ybar,bar width=4pt,fill=violet!75,draw=violet!60!black] coordinates{(1,0.062)(2,0.006)(3,0.024)(4,0.004)(5,0.002)(6,0.002)(7,0.002)(8,0.001)(9,0.001)(10,0.001)(11,0.001)(12,0.001)};
\addplot+[ybar,bar width=4pt,fill=cyan!70,draw=cyan!85!black] coordinates{(1,0.0)(2,0.0)(3,0.0)(4,0.001)(5,0.0)(6,0.0)(7,0.001)(8,0.001)(9,0.001)(10,0.0)(11,0.0)(12,0.0)};
\legend{IO, Index, Exchange, Count, Reduce}

\nextgroupplot[
    ymin=0, ymax=0.9,
    title={DG},
    title style={yshift=-100pt},
]
\addplot+[ybar,bar width=4pt,fill=blue!65,draw=blue!80!black] coordinates{(1,0.054)(2,0.021)(3,0.057)(4,0.336)(5,0.019)(6,0.015)(7,0.017)(8,0.161)(9,0.023)(10,0.028)(11,0.042)(12,0.059)};
\addplot+[ybar,bar width=4pt,fill=red!70,draw=red!85!black] coordinates{(1,0.141)(2,0.077)(3,0.095)(4,0.077)(5,0.041)(6,0.033)(7,0.031)(8,0.048)(9,0.039)(10,0.047)(11,0.063)(12,0.086)};
\addplot+[ybar,bar width=4pt,fill=green!60!black,draw=green!45!black] coordinates{(1,0.263)(2,0.197)(3,0.156)(4,0.148)(5,0.125)(6,0.116)(7,0.114)(8,0.123)(9,0.116)(10,0.125)(11,0.138)(12,0.146)};
\addplot+[ybar,bar width=4pt,fill=violet!75,draw=violet!60!black] coordinates{(1,0.36)(2,0.071)(3,0.201)(4,0.058)(5,0.052)(6,0.057)(7,0.048)(8,0.061)(9,0.051)(10,0.065)(11,0.045)(12,0.059)};
\addplot+[ybar,bar width=4pt,fill=cyan!70,draw=cyan!85!black] coordinates{(1,0.0)(2,0.0)(3,0.0)(4,0.006)(5,0.009)(6,0.003)(7,0.008)(8,0.012)(9,0.007)(10,0.0)(11,0.0)(12,0.0)};

\nextgroupplot[
    ymin=0, ymax=18,    title style={yshift=-100pt},
    title={AOL},
]
\addplot+[ybar,bar width=4pt,fill=blue!65,draw=blue!80!black] coordinates{(1,0.14)(2,0.064)(3,0.054)(4,0.037)(5,0.029)(6,0.026)(7,0.025)(8,0.847)(9,0.031)(10,0.036)(11,0.046)(12,0.066)};
\addplot+[ybar,bar width=4pt,fill=red!70,draw=red!85!black] coordinates{(1,2.825)(2,1.955)(3,1.379)(4,1.282)(5,1.045)(6,0.991)(7,0.891)(8,1.368)(9,0.948)(10,1.016)(11,1.167)(12,1.309)};
\addplot+[ybar,bar width=4pt,fill=green!60!black,draw=green!45!black] coordinates{(1,14.237)(2,9.628)(3,6.692)(4,5.297)(5,3.92)(6,2.658)(7,1.923)(8,1.367)(9,1.026)(10,1.047)(11,0.817)(12,0.797)};
\addplot+[ybar,bar width=4pt,fill=violet!75,draw=violet!60!black] coordinates{(1,0.562)(2,0.356)(3,0.312)(4,0.315)(5,0.304)(6,0.3)(7,0.299)(8,0.296)(9,0.295)(10,0.288)(11,0.293)(12,0.291)};
\addplot+[ybar,bar width=4pt,fill=cyan!70,draw=cyan!85!black] coordinates{(1,0.0)(2,0.0)(3,0.0)(4,0.0)(5,0.001)(6,0.007)(7,0.003)(8,0.011)(9,0.016)(10,0.0)(11,0.0)(12,0.0)};
\nextgroupplot[
    ymin=0, ymax=6, 
    title={EP},    title style={yshift=-100pt},
    ylabel={Time (s)},
]
\addplot+[ybar,bar width=4pt,fill=blue!65,draw=blue!80!black] coordinates{(1,0.738)(2,0.107)(3,0.091)(4,0.041)(5,0.032)(6,0.027)(7,0.025)(8,0.667)(9,0.032)(10,0.035)(11,0.054)(12,0.065)};
\addplot+[ybar,bar width=4pt,fill=red!70,draw=red!85!black] coordinates{(1,1.06)(2,0.593)(3,0.477)(4,0.35)(5,0.276)(6,0.242)(7,0.22)(8,0.541)(9,0.201)(10,0.211)(11,0.238)(12,0.261)};
\addplot+[ybar,bar width=4pt,fill=green!60!black,draw=green!45!black] coordinates{(1,2.576)(2,2.253)(3,1.694)(4,1.423)(5,1.129)(6,0.986)(7,0.849)(8,0.797)(9,0.755)(10,0.805)(11,0.725)(12,0.715)};
\addplot+[ybar,bar width=4pt,fill=violet!75,draw=violet!60!black] coordinates{(1,1.071)(2,1.172)(3,0.823)(4,0.913)(5,0.665)(6,0.656)(7,0.48)(8,0.4)(9,0.312)(10,0.296)(11,0.213)(12,0.202)};
\addplot+[ybar,bar width=4pt,fill=cyan!70,draw=cyan!85!black] coordinates{(1,0.0)(2,0.0)(3,0.0)(4,0.0)(5,0.109)(6,0.038)(7,0.13)(8,0.097)(9,0.11)(10,0.0)(11,0.0)(12,0.0)};

\nextgroupplot[
    ymin=0, ymax=40,    title style={yshift=-100pt},
    title={NX},
]
\addplot+[ybar,bar width=4pt,fill=blue!65,draw=blue!80!black] coordinates{(1,1.023)(2,0.441)(3,0.293)(4,0.812)(5,0.15)(6,3.818)(7,0.112)(8,3.451)(9,0.088)(10,0.078)(11,0.079)(12,0.084)};
\addplot+[ybar,bar width=4pt,fill=red!70,draw=red!85!black] coordinates{(1,8.011)(2,3.992)(3,2.686)(4,2.12)(5,1.458)(6,3.268)(7,0.889)(8,2.546)(9,0.654)(10,0.756)(11,0.8)(12,0.942)};
\addplot+[ybar,bar width=4pt,fill=green!60!black,draw=green!45!black] coordinates{(1,21.997)(2,14.824)(3,12.063)(4,10.634)(5,9.212)(6,8.28)(7,7.67)(8,7.344)(9,7.205)(10,7.552)(11,7.888)(12,8.762)};
\addplot+[ybar,bar width=4pt,fill=violet!75,draw=violet!60!black] coordinates{(1,5.089)(2,3.804)(3,3.906)(4,3.824)(5,3.084)(6,3.71)(7,3.376)(8,3.193)(9,3.418)(10,3.05)(11,3.278)(12,3.568)};
\addplot+[ybar,bar width=4pt,fill=cyan!70,draw=cyan!85!black] coordinates{(1,0.0)(2,0.0)(3,0.0)(4,0.153)(5,0.0)(6,0.583)(7,0.481)(8,0.076)(9,0.609)(10,0.0)(11,0.0)(12,0.0)};

\nextgroupplot[
    ymin=0, ymax=160,    title style={yshift=-100pt},
    title={YH},
]
\addplot+[ybar,bar width=4pt,fill=blue!65,draw=blue!80!black] coordinates{(1,2.387)(2,1.233)(3,0.723)(4,0.582)(5,0.433)(6,10.06)(7,0.292)(8,9.932)(9,0.211)(10,0.18)(11,0.138)(12,0.138)};
\addplot+[ybar,bar width=4pt,fill=red!70,draw=red!85!black] coordinates{(1,20.258)(2,10.462)(3,7.257)(4,5.702)(5,3.999)(6,7.796)(7,2.307)(8,7.535)(9,1.637)(10,1.536)(11,1.483)(12,1.483)};
\addplot+[ybar,bar width=4pt,fill=green!60!black,draw=green!45!black] coordinates{(1,62.25)(2,36.433)(3,26.244)(4,20.948)(5,16.535)(6,14.012)(7,12.087)(8,11.008)(9,10.698)(10,11.627)(11,12.92)(12,12.92)};
\addplot+[ybar,bar width=4pt,fill=violet!75,draw=violet!60!black] coordinates{(1,61.36)(2,40.392)(3,69.268)(4,64.53)(5,50.857)(6,43.942)(7,42.263)(8,38.492)(9,36.287)(10,35.346)(11,32.415)(12,32.415)};
\addplot+[ybar,bar width=4pt,fill=cyan!70,draw=cyan!85!black] coordinates{(1,0.0)(2,0.002)(3,0.004)(4,0.611)(5,15.753)(6,10.839)(7,9.134)(8,7.215)(9,5.363)(10,0.0)(11,0.0)(12,0.0)};

\end{groupplot}
\end{tikzpicture}
\caption{Per-phase execution-time breakdown across configurations.
}
\label{fig:phase_breakdown}
\end{figure*}

\begin{figure*}[t]
\centering
\begin{tikzpicture}
\begin{axis}[
    width=12cm, height=5.0cm,
    ybar=1pt,
    bar width=6pt,
    ymode=log, log basis y=10,
    ymin=0.01, ymax=20,
    log origin=infty,
    symbolic x coords={SE,DBLP,HO,NAP,BC,NIPS,LF,MV,JE,KDD,DG,AOL, EP},
    xtick=data,
    xticklabel style={rotate=45, anchor=east, font=\footnotesize},
    xlabel={\footnotesize Datasets},
    ylabel={\footnotesize End-to-End Time (s)},
    ytick={0.01,0.1,1,10},
    yticklabels={$10^{-2}$,$10^{-1}$,$10^{0}$,$10^{1}$},
    ymajorgrids=true, grid style={dashed, gray!25},
    tick align=outside,
    legend style={
        at={(0.5,1.14)}, anchor=south,
        legend columns=2, draw=none, column sep=1.5ex, font=\small
    },
    enlarge x limits=0.05,
    every axis plot/.append style={fill opacity=0.9, draw opacity=1},
]
\addplot[fill=green!60!black, draw=green!45!black] coordinates {
(SE,0.0230)(DBLP,0.0414)(HO,0.0487)(NAP,0.0672)(BC,0.0963)(NIPS,0.0985)(LF,0.1312)(MV,0.1057)(JE,0.1289)(KDD,0.5541)(DG,0.2148)(AOL, 1.2654) (EP,1.2795)
};
\addlegendentry{D-BBC}

\addplot[fill=orange!80, draw=orange!90!black] coordinates {
(SE,0.0281)(DBLP,0.0330)(HO,0.1224)(NAP,0.2477)(BC,0.1927)(NIPS,0.5822)(LF,0.4296)(MV,0.9009)(JE,3.0401)(KDD,3.8488)(DG,3.2107) (AOL, 6.2654)(EP,5.9708)
};
\addlegendentry{S-Monarch}
\end{axis}
\end{tikzpicture}
\caption{End-to-end execution time of \texttt{D-BBC} andthe \texttt{S-Monarch} baseline across all datasets.}
\label{fig:e2e_bar}
\end{figure*}

Figure~\ref{fig:monarch_curve} compares the counting time of \texttt{D-BBC} with the distributed \texttt{S-Monarch} baseline across all datasets. Among the 13 datasets completed by \texttt{S-Monarch}, \texttt{D-BBC} consistently outperforms the baseline, achieving speedups ranging from $2.00\times$ (DBLP) to $96.00\times$ (JE), with an average speedup of $21.07\times$. For datasets with relatively few balanced butterflies, such as DBLP, BC, and KDD, the performance gap is small because the overhead of explicit balanced-butterfly verification is limited. As illustrated in Figure~\ref{fig:monarch_curve}, the performance gap becomes more pronounced as the graph size increases. In particular, \texttt{S-Monarch} fails to complete on the NX and YH datasets due to their substantially higher computational cost and memory requirements. In contrast, \texttt{D-BBC} successfully processes NX and YH in $2.68,\mathrm{s}$ and $29.20,\mathrm{s}$, respectively, demonstrating its superior scalability on large graphs. The improved performance of \texttt{D-BBC} is primarily due to the elimination of the explicit balanced-butterfly verification required by the adapted \texttt{S-Monarch} baseline.

\subsection{Runtime Breakdown and Scaling Analysis}

To understand how execution time is distributed across the proposed algorithm, we decompose the end-to-end runtime into five phases: \emph{IO} (parallel graph ingestion), \emph{Index} (global vertex indexing, degree computation, and pivot assignment), \emph{Exchange} (all-to-all wedge/edge communication), \emph{Count} (local butterfly counting), and \emph{Reduce} (global aggregation of local butterfly counts).

Table~\ref{tab:configurations} summarizes the hybrid MPI+TBB configurations evaluated in the experiments. Each configuration uses 192 CPU cores, varying the number of MPI ranks and TBB threads. Figure~\ref{fig:phase_breakdown} presents the runtime breakdown for six representative datasets, including two large (YH and NX), two medium (AOL and EP), and two small (DG and DBLP) graphs, across the twelve hybrid configurations ($C_{1}$--$C_{12}$). This analysis highlights how the execution time of each phase evolves with different MPI/TBB configurations and graph sizes. As shown in Figure~\ref{fig:phase_breakdown}, the \emph{Count} phase dominates the execution time for large datasets, whereas its contribution decreases for medium and small datasets. Increasing the number of MPI ranks significantly reduces the counting time but increases the \emph{Exchange} time because of additional inter-process communication. For large graphs, the reduction in computation outweighs the communication overhead, leading to the best overall performance. In contrast, for medium and small graphs, communication overhead becomes increasingly significant, resulting in diminishing performance gains. Across all datasets, the \emph{IO}, \emph{Index}, and \emph{Reduce} phases remain relatively small.

Figure~\ref{fig:e2e_bar} compares the end-to-end execution time (including I/O, Communication, and Counting phase) of \texttt{D-BBC} and \texttt{S-Monarch} across all datasets. \texttt{D-BBC} consistently achieves lower end-to-end execution times, confirming the effectiveness of the proposed approach.

\subsection{Hybrid Configuration Analysis}

Figure~\ref{fig:speedup_heavy} presents the speedup of the six largest datasets under different hybrid MPI+TBB configurations, using the single-rank configuration ($C_1$) as the baseline. The speedup generally increases with the number of MPI ranks, demonstrating the effectiveness of the proposed hybrid parallelization strategy. The largest speedups are observed for the computation-intensive datasets, where additional parallelism significantly reduces butterfly-counting time. At higher MPI rank counts, the performance gain gradually saturates as communication overhead increasingly offsets the reduction in computation time. Overall, configurations $C_9$--$C_{12}$ consistently deliver the highest speedups, indicating an effective balance between inter-process communication and intra-process thread parallelism.

\begin{figure}[t]
\centering
\footnotesize
\begin{tikzpicture}
\begin{axis}[
    width=9.3cm, height=5.0cm,
    xmin=1, xmax=12,
    ymin=0, ymax=10,
    xtick={1,2,3,4,5,6,7,8,9,10,11,12},
    xticklabels={%
        {$C_{1}$},{$C_{2}$},{$C_{3}$},{$C_{4}$},%
        {$C_{5}$},{$C_{6}$},{$C_{7}$},{$C_{8}$},%
        {$C_{9}$},{$C_{10}$},{$C_{11}$},{$C_{12}$}},
    xticklabel style={anchor=north},
    xlabel={Configuration},
    ylabel={Speedup ($\times$)},
    grid=major, grid style={gray!30},
    every axis plot/.append style={line width=1.1pt, smooth, tension=0.6},
 legend style={
    draw=none,
    fill=none,
    at={(0.5,1.15)},
    anchor=south,
    legend columns=3,
    column sep=0.8em,
    font=\scriptsize
},
]
\addplot[blue]
coordinates{(1,1.000)(2,1.653)(3,1.414)(4,1.595)(5,2.038)(6,2.065)(7,2.571)(8,2.590)(9,2.999)(10,3.009)(11,3.120)(12,3.120)};
\addlegendentry{YH}
\addplot[red]
coordinates{(1,1.000)(2,1.566)(3,1.908)(4,2.082)(5,2.600)(6,2.630)(7,3.006)(8,3.058)(9,3.187)(10,3.169)(11,3.011)(12,3.117)};
\addlegendentry{NX}
\addplot[green!60!black]
coordinates{(1,1.000)(2,1.480)(3,2.109)(4,2.565)(5,3.357)(6,4.479)(7,5.687)(8,5.636)(9,7.790)(10,7.687)(11,8.011)(12,7.983)};
\addlegendentry{AOL}
\addplot[violet]
coordinates{(1,1.000)(2,1.320)(3,1.791)(4,2.001)(5,2.595)(6,2.858)(7,3.400)(8,3.431)(9,4.256)(10,4.137)(11,4.628)(12,4.620)};
\addlegendentry{EP}
\addplot[cyan!75!black]
coordinates{(1,1.000)(2,1.624)(3,2.525)(4,3.172)(5,4.738)(6,5.660)(7,6.701)(8,6.761)(9,9.362)(10,9.120)(11,8.421)(12,7.813)};
\addlegendentry{KDD}
\addplot[orange!90!black]
coordinates{(1,1.000)(2,1.241)(3,2.247)(4,4.047)(5,4.134)(6,4.499)(7,5.339)(8,5.35)(9,6.688)(10,5.793)(11,5.818)(12,5.874)};
\addlegendentry{DG}
\end{axis}
\end{tikzpicture}
\caption{Speedup for different hybrid MPI+TBB configurations, with $C_1$ as the baseline.}
\label{fig:speedup_heavy}
\end{figure}

\begin{table*}[t]
\centering
\caption{Single-node dynamic load imbalance across all configurations.}
\label{tab:imbalance}
\setlength{\tabcolsep}{4pt}
\footnotesize
\begin{tabular}{l*{12}{r}}
\toprule
 & \multicolumn{12}{c}{Configuration (MPI ranks $\times$ threads/rank)} \\
\cmidrule(lr){2-13}
Dataset
 & \shortstack{$C_{1}$\\\tiny$1{\times}192$}
 & \shortstack{$C_{2}$\\\tiny$2{\times}96$}
 & \shortstack{$C_{3}$\\\tiny$3{\times}64$}
 & \shortstack{$C_{4}$\\\tiny$4{\times}48$}
 & \shortstack{$C_{5}$\\\tiny$6{\times}32$}
 & \shortstack{$C_{6}$\\\tiny$8{\times}24$}
 & \shortstack{$C_{7}$\\\tiny$12{\times}16$}
 & \shortstack{$C_{8}$\\\tiny$16{\times}12$}
 & \shortstack{$C_{9}$\\\tiny$24{\times}8$}
 & \shortstack{$C_{10}$\\\tiny$32{\times}6$}
 & \shortstack{$C_{11}$\\\tiny$48{\times}4$}
 & \shortstack{$C_{12}$\\\tiny$64{\times}3$} \\
\midrule
YH   & 1.00 & 1.11 & 1.84 & 1.68 & 1.70 & 1.43 & 1.54 & 1.39 & 1.34 & 1.27 & 1.19 & 1.19 \\

NX   & 1.00 & 1.02 & 1.39 & 1.22 & 1.21 & 
1.32 & 1.35 & 1.25 & 1.56 & 1.29 & 1.66 & 1.84 \\

AOL  & 1.00 & 1.17 & 1.41 & 1.51 & 1.94 & 2.02 & 2.42 & 2.40 & 2.54 & 2.64 & 2.67 & 2.89 \\

EP   & 1.00 & 1.03 & 1.07 & 1.22 & 1.46 & 1.55 & 1.66 & 1.69 & 1.82 & 1.95 & 2.04 & 2.28 \\

KDD  & 1.00 & 1.03 & 1.34 & 1.10 & 1.06 & 1.15 & 1.23 & 1.10 & 1.24 & 1.56 & 1.77 & 1.71 \\

DG   & 1.00 & 1.16 & 1.02 & 1.12 & 1.24 & 1.45 & 1.37 & 1.41 & 1.44 & 1.59 & 1.51 & 1.89 \\

JE   & 1.00 & 1.05 & 1.76 & 1.35 & 1.81 & 1.64 & 1.93 & 1.62 & 1.76 & 1.72 & 1.92 & 1.93 \\

LF   & 1.00 & 1.10 & 1.38 & 1.40 & 1.53 & 1.56 & 1.67 & 1.44 & 1.50 & 1.82 & 1.75 & 1.66 \\

BC   & 1.00 & 1.05 & 1.55 & 1.25 & 1.09 & 1.22 & 1.50 & 1.44 & 1.56 & 1.56 & 1.73 & 1.40 \\

MV   & 1.00 & 1.78 & 4.20 & 1.24 & 1.47 & 1.68 & 1.43 & 1.87 & 1.57 & 1.62 & 1.58 & 1.66 \\

NIPS & 1.00 & 1.28 & 1.80 & 1.71 & 1.87 & 1.98 & 1.39 & 1.82 & 1.57 & 1.46 & 1.74 & 1.65 \\

NAP  & 1.00 & 1.07 & 1.94 & 1.13 & 1.66 & 2.10 & 2.14 & 2.16 & 2.54 & 2.52 & 2.71 & 2.91 \\

HO   & 1.00 & 1.18 & 1.94 & 1.90 & 1.51 & 1.72 & 1.42 & 2.02 & 2.15 & 2.33 & 2.72 & 2.93 \\

DBLP & 1.00 & 1.29 & 1.48 & 1.48 & 1.60 & 1.20 & 1.35 & 1.73 & 2.05 & 2.43 & 2.55 & 2.66 \\

SE   & 1.00 & 1.07 & 1.19 & 1.21 & 1.21 & 1.46 & 1.57 & 1.75 & 2.02 & 2.29 & 2.59 & 2.9 \\
\bottomrule
\end{tabular}
\end{table*}

\subsection{Load Balancing Analysis}

Table~\ref{tab:imbalance} reports the dynamic load imbalance, measured by the straggler ratio ($\beta=t_{\max}^{\mathrm{count}}/t_{\min}^{\mathrm{count}}$), across all 15 datasets and 12 hybrid configurations. Since the proposed partitioning balances the estimated workload, the remaining imbalance reflects variations in the actual cost of counting during execution.

Overall, the proposed strategy achieves good dynamic load balancing, particularly for the large, computation-intensive datasets such as YH and NX, where the straggler ratio remains relatively low across most configurations. In contrast, smaller datasets, including DBLP, SE, NAP, and HO, exhibit higher imbalance under fine-grained configurations because their limited computational workload amplifies execution-time variations across MPI ranks. However, these datasets have very short execution times, and therefore the increased imbalance has a negligible impact on the overall runtime. These results demonstrate that the proposed partitioning strategy effectively balances the workloads of large graphs, where load balancing has the greatest impact on performance, while maintaining acceptable imbalance across the remaining datasets.

\subsection{Communication Analysis}
Figure~\ref{fig:bytes_sent} illustrates the total communication volume (bytes sent) for six representative datasets as the number of MPI ranks increases while maintaining a fixed worker budget of 192 CPU cores. For all datasets, the communication volume increases monotonically with the number of MPI ranks. For example, the communication volume of YH grows from 11.0 GB with a single MPI rank to 193.8 GB with 48 MPI ranks, while NX increases from 4.32 GB to 100.36 GB. The increase in communication volume is expected because partitioning the graph among more MPI processes creates additional boundary vertices and wedges that must be exchanged between processes. Although the total amount of communicated data increases, the runtime of the communication phase on a single node decreases due to efficient shared-memory communication and reduced per-rank computational workload. Consequently, the additional communication volume does not become a performance bottleneck in the single-node experiments.


\begin{figure}[t]
\centering
\begin{tikzpicture}
\begin{axis}[
    width=\linewidth,
    height=4.8cm,
    xmode=log,
    log basis x=2,
    ymode=log,
    log basis y=10,
    xmin=0.9,
    xmax=72,
    xtick={1,2,4,8,16,32,64},
    xticklabels={1,2,4,8,16,32,64},
    ymin=0.001,
    ymax=400,
    xlabel={MPI ranks (threads/rank $=192/\text{ranks}$)},
    ylabel={Bytes sent (GB)},
    xlabel style={font=\small},
    ylabel style={font=\small},
    tick label style={font=\scriptsize},
    grid=both,
    major grid style={gray!25},
    minor grid style={gray!12},
    legend style={
        font=\scriptsize,
        at={(0.5,1.12)},
        anchor=south,
        draw=none,
        fill=none,
        legend columns=3,
        column sep=0.8em
    },
    legend cell align=left,
    every axis plot/.append style={
        line width=1.15pt,
        smooth,
    },
]

\addplot[color=blue]
coordinates {
    (1,11.046287)
    (2,15.077907)
    (3,19.109486)
    (4,23.140332)
    (6,31.193123)
    (8,39.231913)
    (12,55.218021)
    (16,71.092951)
    (24,102.471218)
    (32,133.317683)
    (48,193.845063)
    (64,193.845063)
};

\addplot[color=red]
coordinates {
    (1,4.323149)
    (2,5.898894)
    (3,7.474444)
    (4,9.049625)
    (6,12.197512)
    (8,15.342111)
    (12,21.613955)
    (16,27.856455)
    (24,40.261048)
    (32,52.558208)
    (48,76.702617)
    (64,100.361473)
};

\addplot[color=green!60!black]
coordinates {
    (1,0.588722)
    (2,0.81022)
    (3,1.027397)
    (4,1.239383)
    (6,1.648939)
    (8,2.042353)
    (12,2.789847)
    (16,3.496991)
    (24,4.819119)
    (32,6.050057)
    (48,8.304234)
    (64,10.36988)
};

\addplot[color=violet]
coordinates {
    (1,0.517954)
    (2,0.677756)
    (3,0.802673)
    (4,0.909803)
    (6,1.093826)
    (8,1.254863)
    (12,1.537528)
    (16,1.788203)
    (24,2.235324)
    (32,2.638026)
    (48,3.36019)
    (64,4.012073)
};

\addplot[color=cyan!75!black]
coordinates {
    (1,0.121858)
    (2,0.170499)
    (3,0.204326)
    (4,0.232565)
    (6,0.281115)
    (8,0.323871)
    (12,0.399902)
    (16,0.468274)
    (24,0.590851)
    (32,0.701341)
    (48,0.897226)
    (64,1.071956)
};

\addplot[color=orange!90!black]
coordinates {
    (1,0.001327)
    (2,0.001803)
    (3,0.002241)
    (4,0.002638)
    (6,0.003355)
    (8,0.003992)
    (12,0.005086)
    (16,0.006027)
    (24,0.007669)
    (32,0.009026)
    (48,0.011291)
    (64,0.013283)
};

\legend{YH,NX,AOL,EP,DG,DBLP}

\end{axis}
\end{tikzpicture}

\caption{Communication volume (total bytes sent, summed over all MPI ranks)
as a function of the number of MPI ranks for six representative datasets.}
\label{fig:bytes_sent}
\end{figure}

\begin{figure*}[t]
\centering
\begin{minipage}{0.49\linewidth}
\centering
\begin{tikzpicture}
\begin{axis}[
    width=9cm, height=5.0cm,
    ybar=2pt, bar width=7pt,
    ymode=log, log basis y=10,
    log origin y=infty,
    symbolic x coords={DG,KDD,EP,AOL,NX,YH},
    xtick=data,
    xticklabel style={font=\scriptsize},
    ymin=0.01, ymax=200,
    ytick={0.01,0.1,1,10,100},
    yticklabels={$10^{-2}$,$10^{-1}$,$10^{0}$,$10^{1}$,$10^{2}$},
    ylabel={Count-phase time (s)},
    ylabel style={font=\small},
    tick label style={font=\scriptsize},
    grid=major, grid style={gray!15},
    enlarge x limits=0.14,
    legend style={font=\scriptsize, at={(0.5,1.03)}, anchor=south,
                  draw=none, fill=none,
                  /tikz/every even column/.append style={column sep=6pt}},
    legend columns=-1,
]
\addplot+[fill=blue!35, draw=blue!65!black]
    coordinates {(DG,0.091)(KDD,0.093)(EP,0.712)(AOL,0.695)(NX,3.918)(YH,46.287)};
\addplot+[fill=orange!70,draw=orange!85!black]
    coordinates {(DG,0.040)(KDD,0.046)(EP,0.416)(AOL,0.335)(NX,1.948)(YH,20.706)};
\addplot+[fill=green!45!black,draw=green!30!black]
    coordinates {(DG,0.025)(KDD,0.020)(EP,0.150)(AOL,0.100)(NX,0.500)(YH,9.000)};
\legend{1 node, 2 nodes, 4 nodes}
\end{axis}
\end{tikzpicture}
\subcaption{Count-phase time.}
\label{fig:count_1n_2n_4n_top6}
\end{minipage}
\hfill
\begin{minipage}{0.49\linewidth}
\centering
\begin{tikzpicture}
\begin{axis}[
    width=9cm, height=5.0cm,
    ybar=2pt, bar width=7pt,
    ymode=log, log basis y=10,
    log origin y=infty,
    symbolic x coords={DG,KDD,EP,AOL,NX,YH},
    xtick=data,
    xticklabel style={font=\scriptsize},
    ymin=0.2, ymax=1500,
    ytick={1,10,100,1000},
    yticklabels={$10^{0}$,$10^{1}$,$10^{2}$,$10^{3}$},
    ylabel={Bytes sent (GB)},
    ylabel style={font=\small},
    tick label style={font=\scriptsize},
    grid=major, grid style={gray!15},
    enlarge x limits=0.14,
    legend style={font=\scriptsize, at={(0.5,1.03)}, anchor=south,
                  draw=none, fill=none,
                  /tikz/every even column/.append style={column sep=6pt}},
    legend columns=-1,
]
\addplot+[fill=blue!35,draw=blue!65!black]
    coordinates {(DG,1.131)(KDD,0.591)(EP,4.819)(AOL,2.235)(NX,40.261)(YH,102.471)};
\addplot+[fill=orange!70,draw=orange!85!black]
    coordinates {(DG,2.065)(KDD,0.897)(EP,8.304)(AOL,3.360)(NX,76.703)(YH,193.845)};
\addplot+[pattern=north east lines, fill=green!45!black,draw=green!30!black]
    coordinates {(DG,3.77)(KDD,1.36)(EP,14.31)(AOL,5.05)(NX,146.1)(YH,366.7)};
\legend{1 node, 2 nodes, 4 nodes}
\end{axis}
\end{tikzpicture}
\subcaption{Communication volume (bytes sent).}
\label{fig:bytes_1n_2n_4n_top6}
\end{minipage}
\caption{Multi-node behavior on the six largest datasets.}
\label{fig:multinode_count_bytes}
\end{figure*}

\begin{figure*}[t]
\centering

\begin{subfigure}[t]{0.48\textwidth}
\centering
\begin{tikzpicture}
\begin{axis}[
    width=9cm,
    height=5.0cm,
    xlabel={Dataset},
    ybar=2pt,
    bar width=7pt,
    ymode=log,
    log basis y=10,
    log origin y=infty,
    symbolic x coords={DG,KDD,EP,AOL,NX,YH},
    xtick=data,
    xticklabel style={font=\scriptsize},
    ymin=0.1,
    ymax=100,
    ytick={0.1,1,10,100},
    yticklabels={$10^{-1}$,$10^{0}$,$10^{1}$,$10^{2}$},
    ylabel={Exchange time (s)},
    ylabel style={font=\scriptsize},
    tick label style={font=\scriptsize},
    grid=major,
    grid style={gray!15},
    enlarge x limits=0.14,
    legend style={
        font=\scriptsize,
        at={(0.5,1.03)},
        anchor=south,
        draw=none,
        fill=none
    },
    legend columns=-1,
]

\addplot+[fill=blue!35,draw=blue!70!black]
coordinates{
(DG,0.569)
(KDD,0.507)
(EP,2.604)
(AOL,2.011)
(NX,20.609)
(YH,51.701)
};

\addplot+[fill=orange!70,draw=orange!85!black]
coordinates{
(DG,0.458)
(KDD,0.380)
(EP,2.006)
(AOL,1.501)
(NX,15.471)
(YH,38.963)
};

\addplot+[fill=green!45!black,draw=green!30!black]
coordinates{
(DG,0.452)
(KDD,0.318)
(EP,1.829)
(AOL,1.304)
(NX,12.831)
(YH,32.587)
};

\legend{1 node,2 nodes,4 nodes}

\end{axis}
\end{tikzpicture}
\caption{Exchange time.}
\label{fig:exchange_best}
\end{subfigure}
\hfill
\begin{subfigure}[t]{0.48\textwidth}
\centering
\begin{tikzpicture}
\begin{axis}[
    width=9.0cm,
    height=5.0cm,
    xlabel={Dataset},
    ylabel={Peak RSS per rank (MB)},
    symbolic x coords={DG,KDD,EP,AOL,NX,YH},
    xtick=data,
    ymode=log,
    log basis y=10,
    ymin=100,
    ymax=100000,
    ytick={100,1000,10000,100000},
    yticklabels={$10^{2}$,$10^{3}$,$10^{4}$,$10^{5}$},
    grid=major,
    grid style={gray!20},
    tick label style={font=\scriptsize},
    label style={font=\scriptsize},
    legend style={
        font=\scriptsize,
        at={(0.5,1.03)},
        anchor=south,
        draw=none,
        fill=none
    },
    legend columns=3,
]

\addplot[
blue,
very thick,
smooth,
tension=0.4,
]
coordinates{
(DG,604)
(KDD,366)
(EP,2604)
(AOL,1298)
(NX,25999)
(YH,55589)
};

\addplot[
red,
very thick,
smooth,
tension=0.4,
]
coordinates{
(DG,565)
(KDD,295)
(EP,2130)
(AOL,1128)
(NX,17040)
(YH,48115)
};

\addplot[
green!60!black,
very thick,
smooth,
tension=0.4,
]
coordinates{
(DG,514)
(KDD,247)
(EP,1888)
(AOL,964)
(NX,17053)
(YH,40106)
};

\legend{1 node,2 nodes,4 nodes}

\end{axis}
\end{tikzpicture}
\caption{Peak memory usage.}
\label{fig:memory_usage}
\end{subfigure}

\caption{Communication and memory analysis of the proposed distributed algorithm. (a) Exchange time decreases as the number of nodes increases due to improved communication parallelism. (b) Peak memory usage per MPI rank decreases as the graph is partitioned across more processes.}
\label{fig:comm_memory}

\end{figure*}



\subsection{Multi-node Scalability Analysis}

Figure~\ref{fig:count_1n_2n_4n_top6} shows the count-phase execution time of the proposed distributed algorithm on one, two, and four compute nodes. The execution time consistently decreases as the number of nodes increases, indicating the strong scalability of the proposed approach. Larger datasets benefit more from distributed execution due to their higher computational workload, whereas smaller datasets achieve relatively lower improvements.

Figure~\ref{fig:bytes_1n_2n_4n_top6} presents the communication volume (bytes sent) of the proposed distributed algorithm on one, two, and four compute nodes. As the deployment scales out, the communication volume increases because finer graph partitioning across more MPI processes introduces additional boundary vertices and wedges that require inter-process communication. Despite this increase, the counting phase continues to scale effectively across all datasets, with execution time consistently decreasing as more compute nodes are employed.

Figure~\ref{fig:exchange_best} presents the communication (exchange) time achieved using the best MPI/thread configuration for each node count. As the number of MPI ranks increases, the communication volume increases because the pivot exchange involves more graph partitions, leading to additional cross-rank communication. Consequently, the total number of bytes exchanged is primarily determined by the number of physical nodes. Despite the increase in communication volume, communication time decreases as execution scales from 1 to 4 nodes. This improvement is achieved by distributing the communication workload across more MPI processes and network links, allowing each process to exchange a smaller portion of the data in parallel. Overall, these results demonstrate that the communication phase scales efficiently and does not become the primary performance bottleneck as the distributed execution scales across multiple nodes.

Figure~\ref{fig:memory_usage} shows the peak memory usage (RSS) per MPI rank for the best MPI/thread configuration at each node count. As the number of nodes and MPI ranks increases, the memory usage per rank consistently decreases because the input graph is partitioned across more MPI processes, allowing each process to store only a smaller portion of the graph. For example, peak memory usage for the NX dataset decreases from approximately 25.9 GB to 17 GB, and for the YH dataset from about 55 GB to 40 GB. These results demonstrate that the proposed distributed approach effectively reduces per-rank memory usage, enabling the processing of larger graphs that may not fit within the memory capacity of a single compute node.

\section{Ablation Study:}

\subsection{Pure MPI vs. Hybrid MPI+TBB}

To evaluate the contribution of the hybrid programming model, we compare \texttt{D-BBC} using $24$ MPI ranks with $8$ TBB threads per rank against a pure-MPI configuration using $192$ MPI ranks with one thread per rank. Both configurations utilize the same total of $192$ CPU cores, ensuring a fair comparison while isolating the effect of intra-process thread parallelism.

Figure~\ref{fig:count_ablation} reports the count-phase execution time across all datasets. The hybrid configuration consistently outperforms the pure-MPI configuration, achieving an average speedup of $20.2\times$. The largest improvements are observed on the smaller datasets, reaching up to $97\times$ on DBLP, where the computation is relatively small and the communication and synchronization overhead of managing $192$ MPI processes becomes more significant. As the dataset size increases, the counting workload dominates the execution time, reducing the relative impact of MPI overhead. Consequently, the speedup decreases to $1.9\times$ on NX and $2.1\times$ on YH. Overall, the results demonstrate that combining MPI with shared-memory parallelism significantly reduces communication overhead while maintaining efficient utilization of the available CPU cores.


\begin{figure}[!h]

\begin{tikzpicture}
\begin{axis}[
    width=8.0cm, height=5cm,
    ybar=1pt, bar width=4pt,
    ymode=log, log basis y=10, log origin y=infty,
    symbolic x coords={SE,DBLP,HO,NAP,BC,NIPS,LF,MV,JE,KDD,DG,AOL,EP,NX,YH},
    xtick=data,
    xticklabel style={rotate=45, anchor=east, font=\footnotesize},
    xlabel={\footnotesize Datasets},
    ylabel={\scriptsize Count-phase time (s)},
    ymin=0.0004, ymax=200,
    ytick={0.001,0.01,0.1,1,10,100},
    yticklabels={$10^{-3}$,$10^{-2}$,$10^{-1}$,$10^{0}$,$10^{1}$,$10^{2}$},
    ymajorgrids=true, grid style={dashed, gray!25},
    tick align=outside,
    legend style={at={(0.5,1.14)}, anchor=south, legend columns=2,
                  draw=none, column sep=1.5ex, font=\small},
    enlarge x limits=0.04,
    every axis plot/.append style={fill opacity=0.9, draw opacity=1},
]
\addplot[fill=orange!80, draw=orange!90!black] coordinates {
(SE,0.0094)(DBLP,0.0620)(HO,0.0592)(NAP,0.0530)(BC,0.1569)(NIPS,0.1100)(LF,0.1866)(MV,0.1075)(JE,0.1320)(KDD,0.2735)(DG,0.3603)(AOL,0.5616)(EP,1.0713)(NX,5.0894)(YH,69.3599)
};
\addlegendentry{Pure MPI ($192{\times}1$)}

\addplot[fill=green!60!black, draw=green!45!black] coordinates {
(SE,0.0007)(DBLP,0.0006)(HO,0.0017)(NAP,0.0018)(BC,0.0034)(NIPS,0.0118)(LF,0.0134)(MV,0.0153)(JE,0.0108)(KDD,0.0207)(DG,0.0392)(AOL,0.0817)(EP,0.2204)(NX,2.0827)(YH,25.1964)
};
\addlegendentry{Hybrid ($24{\times}8$)}
\end{axis}
\end{tikzpicture}
\caption{Count-phase time of the hybrid configuration versus a pure-MPI.}
\label{fig:count_ablation}
\end{figure}

%


\begin{figure}[!h]
\centering
\begin{tikzpicture}
\begin{axis}[
    width=8.0cm,
    height=5cm,
    ybar=1pt,
    bar width=6pt,
    symbolic x coords={DG,KDD,AOL,EP,NX,YH},
    xtick={DG,KDD,AOL,EP,NX,YH},
    xticklabel style={font=\footnotesize},
    xlabel={\footnotesize Datasets},
    ylabel={\scriptsize Straggler Ratio},
    ymin=0,
    ymax=6.3,
    ymajorgrids=true,
    grid style={dashed, gray!25},
    tick align=outside,
    legend style={
        at={(0.5,1.14)},
        anchor=south,
        legend columns=2,
        draw=none,
        column sep=1.5ex,
        font=\small
    },
    enlarge x limits=0.08,
    every axis plot/.append style={
        fill opacity=0.9,
        draw opacity=1
    },
]

\addplot[
    fill=orange!80,
    draw=orange!90!black
] coordinates {
    (DG,1.666)
    (KDD,1.711)
    (AOL,5.673)
    (EP,2.155)
    (NX,1.354)
};
\addlegendentry{Vertex-Centric}

\addplot[
    fill=green!60!black,
    draw=green!45!black
] coordinates {
    (DG,1.449)
    (KDD,1.148)
    (AOL,2.895)
    (EP,1.551)
    (NX,1.319)
    (YH,1.429)
};
\addlegendentry{Load-Aware}

\node[
    font=\scriptsize,
    text=red,
    anchor=south
] at (axis cs:YH,5.85) {INF};

\end{axis}
\end{tikzpicture}

\caption{Comparison of the straggler ratio achieved by the vertex-centric and
load-aware partitioning strategies. Lower values indicate better workload balance;
INF indicates that the vertex-centric approach did not terminate within the
allowed execution time.}
\label{fig:straggler}
\end{figure}

\subsection{Effectiveness of Load-Aware Partitioning.}
Figure~\ref{fig:straggler} compares the workload imbalance achieved by the naive vertex-centric and the proposed load-aware partitioning strategies. The proposed method consistently reduces the straggler ratio across all datasets. The largest improvement is observed on AOL, where the straggler ratio decreases from 5.67 to 2.89 (48.97\%), followed by KDD (32.90\%), EP (28.04\%), and DG (13.04\%). In contrast, NX shows only a modest improvement (2.63\%) since its workload is already relatively balanced. For the largest dataset, YH, the naive partitioning fails to complete due to severe workload skew and memory exhaustion, whereas the proposed load-aware strategy successfully completes execution with a straggler ratio of 1.43. These results demonstrate that workload-aware partitioning effectively mitigates workload imbalance and improves the robustness of the proposed distributed algorithm.

NOTE:
The percentage improvement in workload balance is computed as

\begin{equation}
\label{eq:improvement}
\text{Improvement (\%)}=
\frac{\mathrm{SR}_{\text{naive}}-\mathrm{SR}_{\text{load-aware}}}
{\mathrm{SR}_{\text{naive}}}\times100,
\end{equation}

where $\mathrm{SR}_{\text{naive}}$ and $\mathrm{SR}_{\text{load-aware}}$ denote the straggler ratios obtained using the naive vertex-centric and the proposed load-aware partitioning strategies, respectively.
\begin{figure*}[ht]
\centering
\begin{tikzpicture}

\begin{axis}[
at={(0,0)},
anchor=south west,
width=5.2cm,
height=4.6cm,
xlabel={Positive edge ratio (\%)},
ylabel={Count time (s)},
ymode=log,
log basis y=10,
xmin=10,xmax=90,
xtick={10,30,50,70,90},
legend style={
draw=none,
font=\scriptsize,
at={(1.25,1.28)},
anchor=north,
legend columns=6},
grid=major,
grid style={dashed,gray!25},
every axis plot/.append style={thick,mark=*}
]

\addplot[color=blue]
coordinates{(10,0.00075)(30,0.00079)(50,0.00084)(70,0.00075)(90,0.00077)};
\addlegendentry{DBLP}

\addplot[color=red]
coordinates{(10,0.00326)(30,0.00361)(50,0.00368)(70,0.00461)(90,0.00286)};
\addlegendentry{NAP}

\addplot[color=green!60!black]
coordinates{(10,0.01819)(30,0.01669)(50,0.01685)(70,0.01650)(90,0.01786)};
\addlegendentry{NIPS}

\addplot[color=orange]
coordinates{(10,0.02121)(30,0.02078)(50,0.02148)(70,0.02119)(90,0.02179)};
\addlegendentry{KDD}

\addplot[color=purple]
coordinates{(10,0.04840)(30,0.04601)(50,0.04694)(70,0.04189)(90,0.03448)};
\addlegendentry{DG}

\addplot[color=cyan!70!black]
coordinates{(10,0.27614)(30,0.29979)(50,0.29564)(70,0.28405)(90,0.27851)};
\addlegendentry{AOL}

\end{axis}

\begin{axis}[
at={(6.2cm,0)},
anchor=south west,
width=5.2cm,
height=4.6cm,
xlabel={Positive edge ratio (\%)},
ylabel={Exchange time (s)},
ymode=log,
log basis y=10,
xmin=10,xmax=90,
xtick={10,30,50,70,90},
grid=major,
grid style={dashed,gray!25},
every axis plot/.append style={thick,mark=*}
]

\addplot[color=blue]
coordinates{(10,0.00736)(30,0.00720)(50,0.00727)(70,0.00722)(90,0.00714)};

\addplot[color=red]
coordinates{(10,0.05018)(30,0.04931)(50,0.05016)(70,0.04940)(90,0.04913)};

\addplot[color=green!60!black]
coordinates{(10,0.10761)(30,0.11063)(50,0.10978)(70,0.10667)(90,0.10685)};

\addplot[color=orange]
coordinates{(10,0.32679)(30,0.32789)(50,0.32986)(70,0.31366)(90,0.31425)};

\addplot[color=purple]
coordinates{(10,0.41333)(30,0.41589)(50,0.41112)(70,0.41034)(90,0.40784)};

\addplot[color=cyan!70!black]
coordinates{(10,1.16828)(30,1.16449)(50,1.17111)(70,1.16915)(90,1.16578)};

\end{axis}

\end{tikzpicture}
\caption{Impact of varying the positive-to-negative edge ratio on counting time and exchange time across all datasets.}
\label{fig:sign_ratio}
\end{figure*}

\subsection{Sign Distribution}
 To assess the robustness of the proposed framework under different edge-sign distributions, we vary the positive-to-negative edge ratio from 10\% to 90\% while preserving the underlying graph topology. Figure~\ref{fig:sign_ratio} shows that both the counting time and communication time remain nearly unchanged across all evaluated datasets, indicating that the computational cost of the proposed algorithm is largely independent of the edge-sign distribution. In contrast, the number of balanced butterflies varies substantially with the sign ratio, as expected, because balanced butterfly formation is directly determined by the arrangement of positive and negative edge labels. Overall, these results demonstrate that the proposed framework delivers stable runtime performance across diverse sign distributions while accurately reflecting the corresponding changes in balanced butterfly counts.

\subsection{Symmetric and Asymmetric Wedges}

To evaluate the effectiveness of separating symmetric and asymmetric wedges into distinct buckets, we compare the proposed approach with a baseline that does not employ this separation. Without bucket-based separation, the algorithm must first enumerate candidate wedge pairs and then explicitly verify whether each pair forms a balanced butterfly by checking the signs of the four edges in the resulting $4$-cycle. This additional enumeration and verification introduces substantial computational overhead, as many candidate $4$-cycles must be examined and subsequently discarded because they correspond to unbalanced butterflies. The results presented in Table~\ref{tab:sym_asym_ablation} demonstrate the performance advantage of explicitly separating symmetric and asymmetric wedges into distinct buckets.

\begin{table}[t]
\centering
\caption{Study of symmetric-asymmetric wedge separation.}
\label{tab:sym_asym_ablation}
\begin{tabular}{|l|c|c|c|}
\hline
Dataset & Without Wedge (s) & With Separate Wedge (s) & Speedup \\
\hline
SE & 3.69 & 0.0.031628 & $\textbf{123}\times$ \\ \hline
HO & 157.00 & 0.935430 & $\textbf{128}\times$ \\
\hline
\end{tabular}
\end{table}

\section{Concluding Remarks}
\label{Sec:Con}
We studied balanced butterfly counting in signed bipartite graphs and presented \texttt{D-BBC}, a distributed algorithm based on a hybrid MPI+TBB model with a workload-aware partitioning strategy. Experiments on $15$ real-world datasets show that \texttt{D-BBC} consistently outperforms state-of-the-art serial, shared-memory, and distributed (\texttt{S-Monarch}) baselines, and scales to large graphs on which \texttt{S-Monarch} runs out of memory. The phase-wise analysis confirms that the counting phase scales efficiently with parallelism, while communication gradually becomes the dominant cost in multi-node execution.
As future work, we plan to reduce boundary-wedge communication through locality- and communication-aware partitioning, and to extend \texttt{D-BBC} to web-scale graphs and to temporal, dynamic, and higher-order motif counting.

\bibliographystyle{ieeetr}
\bibliography{references1}

\begin{IEEEbiography}[{\includegraphics[width=1in,height=1.25in,clip,keepaspectratio]{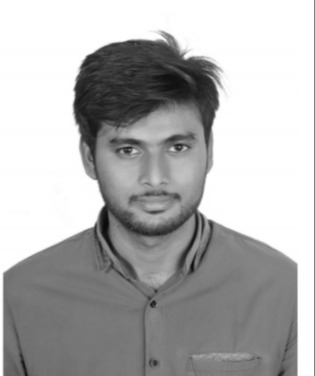}}]{\textbf{Mekala Kiran}} was born in Khammam, Telangana, India, in 
1994. He received the B.Tech. degree in Computer Science and Engineering from Jawaharlal Nehru Technological University (JNTU), Hyderabad, India, in 2016, and the M.Tech. degree in Computer Science and Engineering from Jawaharlal Nehru Technological University (JNTU), Hyderabad, India, in 2019. He is currently pursuing a Ph.D. degree in Computer Science and Information Systems at Birla Institute of Technology and Science (BITS) Pilani, Hyderabad Campus, Hyderabad, India.

His research interests include graph algorithms and high-performance computing. 
\end{IEEEbiography}

\begin{IEEEbiography}[{\includegraphics[width=1in,height=1.25in,clip,keepaspectratio]{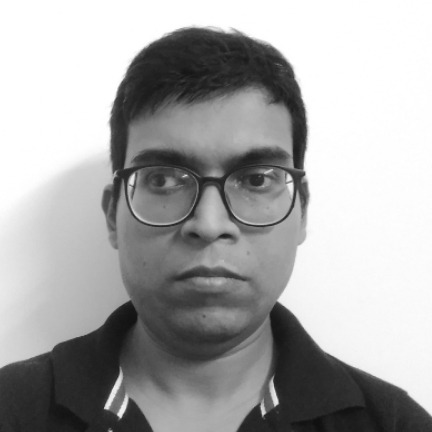}}]{\textbf{Apurba Das}}  is an Assistant Professor in the Department of Computer Science and Information Systems at Birla Institute of Technology and Science, Hyderabad Campus. He
earned his PhD in Computer Engineering from Iowa State University, USA, in 2019, and M.Tech in Computer Science from the Indian Statistical Institute, Kolkata. He completed a postdoctoral fellowship at the School of Computing, National University of Singapore, from 2019 to 2020. 

His research interests include large-scale graph analysis, parallel algorithms, data stream mining, and related areas.
\end{IEEEbiography}

\begin{IEEEbiography}[{\includegraphics[width=1in,height=1.25in,clip,keepaspectratio]{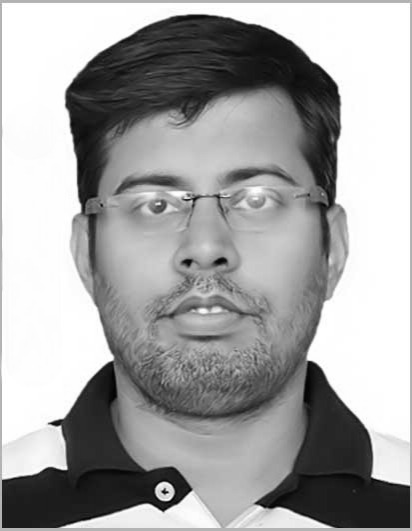}}]{\textbf{Suman Banerjee}}  obtained his Ph.D. from the Indian Institute of Technology Kharagpur
in 2020. After a short stay at the Indian Institute of Technology Gandhinagar as a Post
Doctoral Fellow, he joined the Department of Computer Science and Engineering, Indian
Institute of Technology Jammu, as an assistant professor in the same year. 

His research
interests include algorithm design with a particular focus on social and information
network analysis, scheduling and schedulability analysis in cloud computing systems, distributed and multi-core systems, real-time systems, etc.; structural pattern analysis for time-varying graphs; graph theory and graph algorithms; and parameterised complexity. He has published more than 50 research papers in international journals
and conferences.
\end{IEEEbiography}

\EOD

\end{document}